\newcommand{\normX}[1]{\left|#1\right|}
\newcommand{\argmax}{\mathrm{argmax}}
\newcommand{\argmin}{\mathrm{argmin}}
\newcommand{\dist}{\mathrm{dist}}
\renewcommand{\div}{\mathrm{div}}
\newcommand{\opt}{\mathrm{OPT}}
\newcommand{\sol}{\mathrm{SOL}}
\newcommand{\DM}{\textsc{DM}\xspace}
\newcommand{\FDM}{\textsc{FDM}\xspace}
\newcommand{\MinPairwise}{\textsc{Min-Pairwise}\xspace}
\newcommand{\SumPairwise}{\textsc{Sum-Pairwise}\xspace}
\newcommand{\SumNN}{\textsc{Sum-NN}\xspace}
\newcommand{\MinPairwiseDist}{\textsc{Min-Pairwise Dist}\xspace}
\newcommand{\SumPairwiseDist}{\textsc{Sum-Pairwise Dist}\xspace}
\newcommand{\SumNNDist}{\textsc{Sum-NN Dist}\xspace}
\theoremstyle{plain}
\newtheorem{theorem}{Theorem}[section]
\newtheorem{lemma}[theorem]{Lemma}
\newtheorem{corollary}[theorem]{Corollary}
\newtheorem{observation}[theorem]{Observation}
\newtheorem{claim}[theorem]{Claim}
\theoremstyle{definition}
\newtheorem{definition}[theorem]{Definition}
\theoremstyle{remark}
\newtheorem{remark}[theorem]{Remark}
\title{Core-sets for Fair and Diverse Data Summarization\thanks{This is an equal contribution work.}}
\author{
Sepideh Mahabadi\thanks{Microsoft Research--Redmond. Email: \href{mailto:smahabadi@microsoft.com}{\texttt{smahabadi@microsoft.com}}}
\and Stojan Trajanovski\thanks{Microsoft, London. Email: 
\href{mailto:sttrajan@microsoft.com}{\texttt{sttrajan@microsoft.com}}}
}
\date{}
\begin{document}

\maketitle

\begin{abstract}
We study core-set construction algorithms for the task of Diversity Maximization under fairness/partition constraint.
Given a set of points $P$ in a metric space partitioned into $m$ groups, and given $k_1,\ldots,k_m$, the goal of this problem is to pick $k_i$ points from each group $i$ such that the overall diversity of the $k=\sum_i k_i$ picked points is maximized. 
We consider two natural diversity measures: sum-of-pairwise distances and sum-of-nearest-neighbor distances, and show improved core-set construction algorithms with respect to these measures. 
More precisely, we show the first constant factor core-set w.r.t. sum-of-pairwise distances whose size is independent of the size of the dataset and the aspect ratio. 
Second, we show the first core-set w.r.t. the sum-of-nearest-neighbor distances. 
Finally, we run several experiments showing the effectiveness of our core-set approach. 
In particular, we apply constrained diversity maximization to summarize a set of timed messages that takes into account the messages' recency. Specifically, the summary should include more recent messages compared to older ones. This is a real task in one of the largest communication platforms, affecting the experience of hundreds of millions daily active users. By utilizing our core-set method for this task, we achieve a 100x speed-up while losing the diversity by only a few percent. Moreover, our approach allows us to improve the space usage of the algorithm in the streaming setting.
\end{abstract}

\section{Introduction}

Data summarization problem is a class of tasks where a small subset of items
must be selected as a summary to represent the whole data. Typically in applications, the selected summary is required to fulfill various criteria such as \emph{diversity}.
In this paper, we focus on {\em Diversity Maximization} (\DM)
which is a topic that has attracted significant attention over the past decades \cite{
chandra2001approximation, ravi1991facility, abbassi2013diversity, 
ceccarello2018fast,
ceccarello2020general,
celis2018fair, 
nikolov2016maximizing, nikolov2015randomized, mahabadi2019composable, indyk2020composable, indyk2014composable, mahabadi2023improved, gollapudi2023composable}. 
The goal of this line of research is to provide a summary of a large dataset that preserves the diversity of the data as much as possible. This has many applications in various domains including summarization, recommendation systems, search, and facility location \cite{
gollapudi2009axiomatic,
ziegler2005improving, yu2009recommendation, abbar2013real,
angel2011efficient, drosou2010search, abbar2013diverse, jain2004providing, welch2011search, 
ravi1991facility}.
Formally, in this task, given a universe $P$ of $n$ items, its goal is to choose a small subset of size $k$ that maximizes a  given pre-specified measure of diversity.

{\em Diversity Maximization under Partition Constraints} is a variant of the problem that has been studied to find a diverse summary while satisfying some additional orthogonal constraints  
\cite{borodin2012max, nikolov2016maximizing, abbassi2013diversity, moumoulidou2020diverse, addanki2022improved, celis2018fair, ceccarello2018fast, ceccarello2020general, bhaskara2016linear, mahabadi2022composable}.
Here, the items in the input are partitioned into $m$ disjoint groups $P = P_1\cup\cdots\cup P_m$ and additionally, one is given a pre-specified number of desired results from each group $k_1,\ldots,k_m$, and the goal is to return $k_i$ objects from each group $P_i$ such that the overall diversity among all $k=\sum_i k_i$ points is maximized. This formulation allows to control the number of objects from each category in the output, e.g., the number of movies from each genre shown to a user in a recommendation system, or to bound the number of old messages included in a summary of a user's feed.
Moreover, this formulation has been studied in the context of fairness (e.g. see \cite{moumoulidou2020diverse, addanki2022improved, celis2018fair}), where one wishes to control the number of results from each population in the produced summary. We will therefore refer to the Diversity Maximization under Partition Constraints as \emph{Fair Diversity Maximization} (\FDM), and use $\div_{k_1,\ldots,k_m}(P)$ to refer to the optimal achievable diversity under the fairness constraint. In this work, we consider fair diversity maximization with respect to three common diversity measures. 

\paragraph{Diversity Measures.}
Several measures (or objective functions) have been proposed in the literature to model the notion of diversity \cite{
chandra2001approximation,
abbassi2013diversity, bhaskara2016linear, mahabadi2023improved, moumoulidou2020diverse, ceccarello2020general, indyk2014composable,
gong2014diverse, 
celis2018fair, nikolov2016maximizing, nikolov2015randomized}.
A main category of such measures are based on pairwise distances (see \cite{chandra2001approximation} for a complete list). In this work, we will consider three of the most common pairwise distance-based measures: (1) the {\em minimum-pairwise} distance, where for a subset of points $S\subseteq P$, their diversity is measured as the minimum pairwise distance between the points in the subset, i.e., $\min_{p,q\in S} \dist(p,q)$; (2) {\em sum-of-pairwise} distances between the points in the subset, i.e., $\sum_{p,q\in S} \dist(p,q)$; and (3) the {\em sum-of-nearest-neighbor} distances which is an interpolation between the first two measures, and is defined formally as $\sum_{p\in S}\min_{q\in S\setminus\{p\}} \dist(p,q)$. We refer to these measures respectively as \MinPairwiseDist, \SumPairwiseDist, and \SumNNDist. All these three measures have been used previously to model diversity (e.g. see \cite{chandra2001approximation, indyk2014composable, bhaskara2016linear, mahabadi2023improved}, where they are respectively referred to as \emph{remote-edge}, \emph{remote-clique}, and \emph{remote-pseudoforest}).
In this work, we study Fair Diversity Maximization (\FDM) over large datasets with respect to these three diversity measures.

\paragraph{Core-sets for Diversity Maximization.}
As one of the main applications of diversity maximization is related to data summarization, 
there has been a large body of work to solve diversity maximization in massive data models of computation, and in particular the design of {\em core-sets} \cite{indyk2014composable, indyk2020composable, mahabadi2019composable, mirrokni2015randomized, moumoulidou2020diverse, ceccarello2018fast, ceccarello2020general, zadeh2017scalable, mahabadi2023improved, mahabadi2022composable, gollapudi2023composable}. 
Core-set is a small subset of the data which is sufficient for computing an approximate solution of a pre-specified optimization problem on the whole data. 

More specifically, in this work we focus on construction of \emph{composable core-sets}\cite{indyk2014composable}: we present a summarization algorithm $\mathcal{A}$ that processes each group $P_i$ \emph{independently} and produces a small subset of it as its summary $S_i=\mathcal{A}(P_i)\subseteq P_i$, with the property that the fair diversity of the data is approximately preserved, i.e., $\div_{k_1,\ldots,k_m}(S)\geq \frac{1}{\alpha}\cdot\div_{k_1,\ldots,k_m}(P)$, where $S$ is defined to be the union of all core-sets, i.e., $S=\bigcup_i S_i$. 
Composable core-sets are in particular useful for big data models of computation such as distributed and streaming settings as discussed in details in \cite{indyk2014composable}. For example, in a distributed setting where the dataset is partitioned over multiple machines based on groups, each machine can compute a core-set for its own dataset, and only send this small summary over to a single aggregator. The aggregator then processes the union of the summaries and outputs the solution.
We provide further applications of the core-sets for processing timed datasets in the experimental sections of our paper.


\subsection{Prior Work}
Table \ref{table:results} shows a summary of prior and our results.

\paragraph{Fair Diversity Maximization.}
Moumoulidou \emph{et al.}~\cite{moumoulidou2020diverse} studied fair diversity maximization under \MinPairwiseDist and gave an $O(m)$ approximation algorithm for the problem. The bound was later improved~\cite{addanki2022improved} to $m+1$. However, it is not known whether a linear dependence on $m$ is necessary. In fact the problem admits an $O(1)$ approximation in the unconstrained version \cite{gonzalez1985clustering, ravi1991facility}.
For \SumPairwiseDist the problem admits a $(1/2+\epsilon)$ approximation \cite{abbassi2013diversity, borodin2012max}, which matches the performance of the best algorithm for the unconstrained diversity maximization.
Finally, for \SumNNDist, Bhaskara \emph{et al.}~\cite{bhaskara2016linear} presented a randomized $O(1)$ approximation algorithm based on solving an LP for both \DM and \FDM.

\paragraph{Core-sets for Fair Diversity Maximization.} Core-set construction algorithms for the unconstrained \DM has been shown in \cite{indyk2014composable} with respect to all three diversity measures. For \FDM, Moumoulidou \emph{et al.}~\cite{moumoulidou2020diverse} showed that running a greedy algorithm on each group independently provides a core-set with a constant approximation factor, with respect to the \MinPairwiseDist. For the \SumPairwiseDist, Ceccarello \emph{et al.}~\cite{ceccarello2018fast} provides a core-set with the almost optimal $(1-\epsilon)$ approximation factor. However, the size of the core-set depends on a parameter similar to the aspect ratio of the dataset. In particular, they show that for the case of doubling metrics, the core-set size could have an exponential dependence on the doubling dimension. Under \SumNNDist, there has been no prior work on core-sets for \FDM.

\setlength{\tabcolsep}{4pt}

\begin{table*}[htbp]
    \caption{The summary of prior and our results for \FDM. Here $k$ is the size of the solution and $m$ is the number of groups. All the core-set results mentioned in the table have size $poly(k)$.}\label{table:results}
    \centering
    \begin{tabular}{c|ccc}
        \toprule
        & \MinPairwiseDist & \SumPairwiseDist & \SumNNDist \\        
        \hline
        \FDM & $O(m)$~\cite{moumoulidou2020diverse, addanki2022improved} & $O(1)$~\cite{abbassi2013diversity} & $O(1)$~\cite{bhaskara2016linear}
        \\
        Core-set for \FDM & $O(1)$~\cite{moumoulidou2020diverse} & $O(1)$~\textbf{Section \ref{sec:coreset-sum-pairwise}} & $O(m\cdot \log k)$~\textbf{Section \ref{sec:coreset-sum-min}}
        \\ 
        \bottomrule
    \end{tabular}
\end{table*}

\subsection{Our Results}
\noindent\textbf{Theoretical results.} We show the following theoretical results in this paper.
\begin{itemize}
    \item We show the first core-set construction algorithm for \FDM under \SumPairwiseDist, with constant approximation factor whose size is independent of the number of points and the aspect ratio. In fact, we show a core-set of size only $k_i^2$ for each group $P_i$, that together achieve a constant factor approximation (see Theorem \ref{thm:coreset-sum}). 
    
    \item We also show the first core-set construction algorithm for the \FDM under the \SumNNDist notion of diversity. We provide a core-set of size $O(k^2)$ for each group $P_i$ that together achieve an $O(m\cdot\log k)$ approximation factor (see Theorem \ref{thm:coreset-sum-min}). We remark that although the approximation factor is probably not optimal, we see in the experimental section that the algorithm performs well on real data. 
    
    To get the above core-set algorithm, we first show in Section \ref{sec:opt-sum-min}, an approximation algorithm for \FDM under the \SumNNDist. More precisely, we show an $O(m^2\cdot \log k)$ approximation with polynomial running time (see Theorem \ref{thm:opt-sum-min}) and $O(m\cdot \log k)$ approximation with an exponential runtime in $k$, (see Theorem \ref{thm:opt-exp-sum-min}). This algorithm might be of independent interest as the best previous algorithm by \cite{bhaskara2016linear}, despite having the ideal $O(1)$ approximation, is randomized and based on solving an LP, whereas our algorithm is deterministic and based on a greedy approach (see Algorithm \ref{alg:gmm}). In fact all of our algorithms are simple to implement and thus practical as we show next. 
    \item Finally, in Section \ref{sec:fully-coreset} of the Appendices we show how to apply our results to the setting where the partitioning is not necessarily done according to the colors and thus get a fully composable core-set for these notions.
\end{itemize}
\noindent\textbf{Experiments.}
We perform experiments for all three diversity measures. Our first set of experiments uses \FDM as a tool to account for recency in a summary computed on timed datasets. Here, we group a dataset of messages by their created time. The goal is to compute a diverse summary of the messages, where we additionally require to include less number of messages in the summary from the older batches of messages, and more from the recent batches. This is a real task in one of the largest commercial communication platforms. We get the following experiments and results.
\begin{itemize}
    \item First, we show that running diversity maximization algorithms without the group constraint does not satisfy our requirement and reports about equally number of old messages and the new ones. 
    \item Next, we compute the price of fairness (i.e., price of \emph{balancedness} in this context), that is we compute how much we lose diversity by imposing the grouping requirement. Our experiments on a Reddit dataset shows that the diversity only decreases by around 1\% for sum-of-pairwise distances; few percent up to no more than 20\% for sum-of-NN distances and around 50\% for minimum-pairwise distances (due to its fragility) metrics.
    \item We use the core-set construction algorithms to summarize each group first, and then run our diversity maximization algorithms on the union of the core-sets. Our experiments show that using core-sets, the runtime of our algorithm improves on average by factor of $100 \times$, while only losing diversity by few percent. We further remark that using core-sets in this context has an additional benefit: it removes the need to recompute the summary on the whole data when new messages arrive: once we summarize a batch of old message, we no longer need to process the batch and only need to work with the core-set that is computed once.

\end{itemize}
We further show experiments that uses FDM as a tool for controlling the desired contribution of each genre in a movie recommendation system.

Finally, we run experiments to compare our proposed core-set of Section \ref{sec:coreset-sum-pairwise} to the prior algorithm of \cite{ceccarello2018fast} for core-set construction. We see that while the core-set produced by our algorithm is on average smaller by a factor of 200x, its performance is only worse by 1.3\%.

\subsection{Preliminaries}
Throughout this work, we assume that we are given a point set $P = P_1\cup\cdots\cup P_m$ in a metric space $(\mathcal{X},\dist)$. Each point in $P$ comes from one of the $m$ groups in $[m]$, which we also refer to as \emph{colors}. We use $P_i$ to denote the points of color $i$. 
We use $n$ to denote $\normX{P}$ and $n_i$ to denote $\normX{P_i}$.
We denote the distance between two points $p,q$ by $\dist(p,q)$ and for a set of points $S$, we use $\dist(p,S)$ to denote $\min_{q\in S}\dist(p,q)$.

\subsubsection{Optimization problem}
The optimization problem considered in this paper is Fair Diversity Maximization as defined below.
\begin{definition}[Fair Diversity Maximization (\FDM)]
Given a colored point set $P=\bigcup_i P_i$, and $k_1,\ldots,k_m$, the goal is to pick subsets $S_i\subseteq P_i$, such that first $|S_i|=k_i$, and second $\div(S)$ is maximized where $S=\bigcup_i S_i$. We will use $\div_{k_1,\ldots,k_m}(P)$ to denote the optimal diversity one can achieve this way, i.e., 
\[
\div_{k_1,\ldots,k_m}(P) = \max_{S_1\subseteq P_1,\ldots,S_m\subseteq P_m: |S_i|=k_i} \div(\bigcup_{i\leq m} S_i)
\]
We will use $k$ to denote $\sum_i k_i$, and thus $\normX{S}=k$.
\end{definition}

\begin{definition}[Diversity Measures]
For a set of points $S$, we consider the following diversity measures in this work
: i) \MinPairwiseDist: $\min_{p,q\in S} \dist(p,q)$; ii) \SumPairwiseDist: $\sum_{p,q\in S} \dist(p,q)$; iii) \SumNNDist: $\sum_{p\in S} \min_{q\in S\setminus\{p\}} \dist(p,q)$.
\end{definition}

\subsubsection{Summarization task}
The goal in our paper is to provide an intermediate summarization (i.e., a core-set) algorithm such that the union of the core-sets contains a good solution relative to the whole data. Let us formally define this notion.

\begin{definition}[Core-set]
Given a point set $P$, a subset $T\subseteq P$ is called an $\alpha$-approximate core-set with respect to an optimization function $f$ if the optimal value of $f$ over $T$ is within a factor $\alpha$ of the optimal value of $f$ over $P$.
\end{definition}
It is desired that the size of the core-set is small. In this paper we focus on the optimization function $f$ being the diversity maximization function.
Further, we emphasize that our algorithm for constructing core-sets for $P$, processes each group $P_i$ independent of other groups. Therefore, it provides a form of composability property as defined below.
\begin{definition}[Color-Abiding Composable Core-set]\label{def:comp-coreset}
An algorithm $\mathcal{A}$ is said to construct an $\alpha$-approximate color-abiding composable core-set, if it produces a subset $T_i=\mathcal{A}(P_i)\subseteq P_i$ independent of the points in other colors, s.t. $\div_{k_1,\ldots,k_m}(T)\geq \frac{1}{\alpha}\div_{k_1,\ldots,k_m}(P)$, where $T=\bigcup_i T_i$.
\end{definition}
Again, we note that although it is desireable that the size of the core-set $|T_i|$ is small, it does not necessarily need to be $k_i$. Later on, as a post-processing, one can use any exact or approximation algorithm on the union of the computed core-sets, i.e., $T$ to get a final solution. However the focus of this work is on the core-set computation algorithm $\mathcal{A}$.

Throughout this work, for brevity we use \emph{composable core-set} to refer to Definition \ref{def:comp-coreset}.
However, we remark that the above definition of color-abiding composable core-set is only applicable when points are partitioned based on the colors, and differs from the standard notion of \emph{composable core-sets} defined in \cite{indyk2014composable}, where the partitioning is done arbitrarily. Later in Section \ref{sec:fully-coreset} of the Appendices, we show how to employ our algorithms and get fully composable core-sets (that is not necessarily color-abiding) as defined in \cite{indyk2014composable}.

\subsubsection{The GMM algorithm}
In this work we will use the greedy algorithm of \cite{gonzalez1985clustering,ravi1991facility} which we denote by GMM and is depicted in Algorithm \ref{alg:gmm}. Let us also state three well-known properties of GMM:

\begin{enumerate}
    \item \label{prop:gmm-1} $r_i$'s are decreasing, i.e., for $j>i$, we have $r_i\geq r_j$.
    \item \label{prop:gmm-2} For any $i<k$, let $S_i = \{p_1,\ldots,p_i\}$. Then for any $p\in P\setminus S_i$, we have that $\dist(p,S_i)\leq r_{i+1}$.
    \item \label{prop:gmm-3} Let $T\subset_{k} P$ be any subset of $k$ points in $P$. Then the minimum pairwise distance in $T$ is at most $2\cdot r_k$.
\end{enumerate} 
\begin{algorithm}[htbp]
\caption{The GMM Algorithm}
\label{alg:gmm}
\textbf{Input} a point set $P$, and $k$\\
\textbf{Output} subset $S=\{p_1,\ldots,p_k\}\subseteq P$, and radii $r_1,\ldots,r_k$.
\begin{algorithmic}[1]
\STATE{$S\leftarrow$ an arbitrary point from $P$, and $r_1\leftarrow \infty$}
\FOR {$i=2$ {\bfseries to} $k$}
        \STATE{$p_i\leftarrow$ the farthest point in $P$ from $S$, i.e., $\argmax_{p\in P} \dist(p,S)$}
    \STATE{$S \leftarrow S\cup p_i$}
    \STATE{$r_i\leftarrow$ minimum pairwise distance in $S$, i.e., $\min_{p_1,p_2\in S} \dist(p_1,p_2)$}
\ENDFOR

\STATE {\bfseries return }{$S$}
\end{algorithmic}
\end{algorithm}

\subsection{Overview of the Algorithms}
\noindent\textbf{Core-set for \SumPairwiseDist.} 
Suppose that we want to find a solution that maximizes \FDM under \SumPairwiseDist notion of diversity.
Consider the optimal solution $\opt$ and let $\opt_i$ be the set of points from color $i$. 
Our algorithm for constructing a core-set  proceeds by running GMM for $k_i$ iterations on $P_i$, to get a \emph{seed} of size $k_i$. 
It is then a property of GMM that if each point $p\in \opt_i$ is mapped (i.e. moved) to the closest seed in $P_i$, its distance does not change by more than $r_i$ (as defined in GMM). Now the points in different colors can have different $r_i$ and therefore the movements can be of different scales. For two points $p,q\in \opt$, if the scales of their movements w.r.t. their original distance are small, then the new distance is still large compare to before. 
Otherwise, if the scale of these movements are large, we show how to charge the loss in the diversity of this pair onto other pairs that have relatively small movements. If there are not enough such pairs, one can then show that the solution returned by the seeds themselves have had large diversity.
Finally, in order to make the map injective (so that no two points are mapped to the same core-set point), for each of the seed points, the algorithm also stores in the core-set, the closest $k_i$ points to the seed, and thus increasing the size of the core-set to $k_i^2$.

There is one hard case in the above approach and that is when $k_i=1$. In this case, the GMM algorithm can return any arbitrary point as the seed and that could provably be a bad core-set. To handle this case, we show that it is enough for the core-set algorithm to find at least a minimum of two seeds. Then we show that it is always possible to pick one of the seeds and get a reasonably good approximation. This modification is proved to work in Appendix~\ref{sec:coreset-sum-generalization}.

\paragraph{Approximating \FDM under \SumNNDist.} 
Suppose that we want to find a solution that maximizes \FDM under \SumNNDist notion of diversity.
Consider the optimal solution $\opt$ and let $\opt_i$ be the set of points from color $i$. Now each point $p\in\opt$ has a contribution towards the diversity, i.e., $\dist(p,\opt\setminus\{p\})$, and thus the contribution of each group $\opt_i$ towards the optimal solution is well-defined. Let us call the color with the maximum contribution as the \emph{most significant} color. Clearly if we find a solution $\sol$ whose value is at least as large as the contribution of the most significant color, then we are within a factor of $m$ of the optimal solution. 

Next, suppose that the most significant color is $i$ and let $d_1\geq \cdots \geq d_{k_i}$ be the contributions of the points in the optimal solution that are of color $i$. It is not hard to prove (as we will show later) that $\max_j (j\cdot d_j)$ approximates $\sum_{j=1}^{k_i} d_j$ up to a factor of $\log k$. So we will construct a solution whose value is at least $j\cdot d_j$ and this will be within a factor of $(m\cdot \log k)$ of the optimal solution as promised.

More concretely, we will construct a solution such that it contains $j$ points of color $i$ where no other point (from any color) lies within a distance $d_j$ of them (in reality, we find $O(j)$ points such that no other point lies within a distance $O(d_j)$ of them). Of course, we do not know $i$, $j$, and $d_j$ \emph{a priori}. We can enumerate $i$ and $j$. Further, if we run the GMM algorithm on color $i$ for $j$ iterations, then the minimum pairwise distance between the retrieved points is an approximation to $d_j$.

This itself is enough to pick $O(j)$ points of color $i$ that are $O(d_j)$ far from each other. Let us call these points the seeds. However, the remaining challenge is to pick the rest of the points (that is the $k_i-O(j)$ points of color $i$, as well as $k_{\ell}$ points of color $\ell$ for all $\ell\neq i$) away from these seeds. 

This brings us to the following problem: given a colored point set $P$ and a set of at most $k$ balls $B$, how to pick the largest subset of balls $B'\subseteq B$, such that enough points exist \emph{outside} of $B'$. (Note that $B$ is basically the set of balls with radius $O(d_j)$, that are centered at the $j$ points returned by the first $j$ iterations of GMM). This problem can be solved exactly by an exhaustive search and spending exponential in $k$ time. One can also get a polynomial time $O(m)$-approximate solution for the problem by iteratively excluding half of the remaining balls from the solution, and at the same time satisfying the condition for half of the colors. Thus, after $O(\log m)$ iterations, all color constraints are satisfied, and a $1/m$ fraction of the balls remain.

There are further technical details needed for the proof to go through.

\paragraph{Core-set for \SumNNDist.}
In order to construct a core-set for \FDM under the \SumNNDist notion of diversity, we show that it is enough to construct a subset of the points such that the solution to the above problem (finding the maximum number of balls such that enough points exist outside of them) remains unchanged. We show that the GMM algorithm has this property. Intuitively, since GMM chooses the points that are pairwise-far from each other, not too many points are picked inside individual balls and thus the solution to the balls problem remains relatively unchanged. Therefore, if we run the above algorithm on the core-sets produced by GMM, the solution is similar as if we had run it on the whole data.

\section{Core-set for \FDM under \SumPairwiseDist} \label{sec:coreset-sum-pairwise}
In this section we show an algorithm for constructing a core-set for \FDM with respect to the \SumPairwiseDist notion of diversity.
Given a colored point set $P=P_1\cup\cdots\cup P_m$, and $k_1,\ldots,k_m$, the goal is to come up with a core-set construction algorithm $\mathcal{A}$ that independently summarizes each point set $P_i$ and produces a small subset of it $S_i = \mathcal{A}(P_i)\subseteq P_i$, such that $\div_{k_1,\ldots,k_m}(S)\geq \frac{1}{\alpha} \cdot \div_{k_1,\ldots,k_m}(P)$, where again $S=\bigcup_i S_i$. For simplicity, let us assume that for all $i$, we have $k_i\geq 2$. Otherwise, we show in 
Appendix \ref{sec:coreset-sum-generalization} 
that a slight modification of the algorithm works (although the proof becomes more involved).

The core-set construction algorithm is shown in Algorithm \ref{alg:coreset-sum}. The algorithm proceeds by running GMM for $k_i$ iterations to compute a set of $k_i$ centers. Then for each of these $k_i$ centers such as $p$, the algorithm stores at most $k_i$ points from $P_i$ whose closest center among all $k_i$ centers is $p$.

\begin{algorithm}[htbp]
\caption{Core-set Construction Algorithm for \SumPairwise}
\label{alg:coreset-sum}
\textbf{Input} a point set $P_i$, together with parameters $k_i$ and $k$ (where $k=k_1+\cdots+k_m$)\\
\textbf{Output} a subset $S_i\subseteq P_i$
\begin{algorithmic}[1]
    \STATE{$S_i=\{p_{1},\ldots,p_{k_i}\}\leftarrow $ GMM$(P_i,k_i)$}\label{line:sum-gmm}
    \STATE{$T\leftarrow \emptyset$}
    \FOR{$p\in S_i$}
        \FOR{$j=1$ {\bfseries to} $k_i$}
            \STATE{$T \leftarrow T\cup$ any point $p_j\in P_i\setminus T$ s.t. $\argmin_{q\in S_i}\dist(p_j,q) = p$.}\label{line:sum-additional}
        \ENDFOR
    \ENDFOR
    \STATE{$S_i\leftarrow S_i\cup T$}
\STATE{{\bfseries return} $S_i$}
\end{algorithmic}
\end{algorithm}

\begin{theorem}\label{thm:coreset-sum}
Algorithm \ref{alg:coreset-sum} produces a core-set with size $O(k_i^2)$ and a constant factor approximation: 
$\div_{k_1,\ldots,k_m}(S)\geq \frac{1}{C}\cdot\div_{k_1,\ldots,k_m}(P)$ for a constant $C$. 
\end{theorem}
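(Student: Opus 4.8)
The plan is to exhibit two candidate fair solutions inside the core-set $S=\bigcup_i S_i$ and show that at least one of them has diversity $\Omega(\div_{k_1,\ldots,k_m}(P))$. Fix an optimal solution $\opt=\bigcup_i\opt_i$ with $|\opt_i|=k_i$, and for each color $i$ let $R_i$ denote the last radius $r_{k_i}$ produced by running GMM on $P_i$ (line \ref{line:sum-gmm}). The first candidate is the \emph{mapped optimum} $\sol=\phi(\opt)$, built as follows. By the covering property of GMM (Property \ref{prop:gmm-2} together with the monotonicity of the radii, Property \ref{prop:gmm-1}), every $p\in\opt_i$ lies within $R_i$ of its nearest seed $s(p)\in\{p_1,\ldots,p_{k_i}\}$, and the points stored in line \ref{line:sum-additional} likewise lie within $R_i$ of their nearest seed. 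Since at most $k_i$ optimal points of color $i$ can route to a fixed seed $s$, while the algorithm stored $k_i$ points around $s$ (or all of them, if fewer exist), one can injectively assign the optimal points routed to $s$ to distinct stored points around $s$. This yields an injective, color-preserving map $\phi:\opt\to S$ with $\dist(p,\phi(p))\le 2R_i$ for $p\in\opt_i$, so $\sol=\phi(\opt)\subseteq S$ is a valid fair solution. The second candidate is the seed set $\bigcup_i\{p_1,\ldots,p_{k_i}\}\subseteq S$, which is also a valid fair solution and whose same-color pairs are at distance at least $R_i$ by definition of $R_i=r_{k_i}$ (line computing $r_i$ in Algorithm \ref{alg:gmm}).

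Next I would classify the pairs of $\opt$. Call a pair $\{p,q\}$ with $p\in\opt_a$, $q\in\opt_b$ \emph{good} if $\dist(p,q)\ge\lambda\max(R_a,R_b)$ for a constant $\lambda$ to be fixed, and \emph{bad} otherwise. For a good pair, the triangle inequality with $\dist(p,\phi(p))\le 2R_a$ and $\dist(q,\phi(q))\le 2R_b$ gives $\dist(\phi(p),\phi(q))\ge\dist(p,q)-2R_a-2R_b\ge(1-4/\lambda)\dist(p,q)$. Summing over good pairs, $\div(\sol)\ge(1-4/\lambda)\,G$, where $G=\sum_{\mathrm{good}}\dist(p,q)$. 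The remaining diversity $\div(\opt)-G$ is carried entirely by bad pairs, each of which satisfies $\dist(p,q)<\lambda\max(R_a,R_b)$, i.e. is short on the scale of the seed separations. This sets up the dichotomy: if $G\ge\tfrac12\div(\opt)$, then with $\lambda=16$ the mapped solution $\sol$ alone is a constant-factor core-set solution; otherwise the bad pairs account for at least half of $\div(\opt)$, and I must show that this bad mass is absorbed by the diversity of the seed solution, so that $\max\{\div(\sol),\div(\bigcup_i\{p_1,\ldots,p_{k_i}\})\}\ge\tfrac1C\div(\opt)$. Since both candidates live in $S$, this proves the approximation claim, and the size bound $|S_i|\le k_i+k_i^2=O(k_i^2)$ is immediate from the construction.

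The hard part will be the final charging step of the second case: bounding the total bad-pair distance by the seed diversity. The natural estimate replaces each point by its nearest seed, $\dist(p,q)\le\dist(s(p),s(q))+R_a+R_b$, but this introduces a multiplicity $n_s$ (the number of optimal points routed to seed $s$), so one only gets $\sum_{\mathrm{bad}}\dist(s(p),s(q))\le\sum_{s,s'}n_s n_{s'}\dist(s,s')$, which can exceed $\div(\bigcup_i\{p_1,\ldots,p_{k_i}\})=\tfrac12\sum_{s\neq s'}\dist(s,s')$ by more than a constant factor. Moreover, the within-color seed distances alone are insufficient: in a configuration where one color is well spread and all other colors crowd within $O(R_i)$ of it, the $\binom{k_i}{2}$ same-color seed pairs carry only an $O(k_i^2 R_i)$ budget while the bad mass can be as large as $\Theta(k_i\cdot k\cdot R_i)$, so the cross-color seed distances must be exploited as well.

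I expect to resolve this obstacle by a careful packing/charging argument that exploits the $\ge R_i$-separation of same-color seeds: because the seed images of the optimal points are themselves spread out, the multiplicities $n_s$ do not inflate the seed diversity by more than a constant, and each bad-pair distance $\dist(p,q)$ can then be charged to the nearby (within- or cross-color) seed pair $\{s(p),s(q)\}$ that it approximates up to the additive $R_a+R_b$ slack. Finally, the assumption $k_i\ge 2$ used throughout — which guarantees that GMM returns at least two seeds per color so that $R_i$ and the injectivity argument are meaningful — is exactly the degenerate case deferred to Appendix \ref{sec:coreset-sum-generalization}, and I would handle it by the stated modification of selecting a suitable pair of seeds.
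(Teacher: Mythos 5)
Your overall architecture is exactly the paper's: split $\div(\opt)$ into pairs that are long relative to the per-color GMM radii and pairs that are short, handle the long pairs by an injective color-preserving map $\phi:\opt\to S$ with $\dist(p,\phi(p))\le 2r_i$ (your injectivity argument via the $k_i$ stored points per seed and the threshold/triangle-inequality computation for good pairs match the paper's Case 2, which uses threshold $5\max\{r_x,r_y\}$ and loses a factor $5$), and handle the short pairs by falling back on the seed set. The size bound is also as in the paper. However, you have left the decisive step of the short-pair case unproven, and the route you sketch for it is a detour that the paper does not need. You propose to charge each bad-pair distance $\dist(p,q)$ to the seed pair $\{s(p),s(q)\}$ that approximates it, and you correctly worry that the multiplicities $n_s n_{s'}$ then inflate the comparison against $\sum_{s\ne s'}\dist(s,s')$. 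The paper sidesteps this entirely: since every bad pair satisfies $\dist(x,y)<5\max\{r_x,r_y\}\le 5(r_x+r_y)$ \emph{by definition}, one never needs to relate $\dist(p,q)$ to any actual seed distance; the total bad mass is bounded crudely by $\sum_{x,y\in\opt}10\,r_x\le 10\,k\sum_i k_i r_i$ (so $\div(\opt)\le 20\,k\sum_i k_i r_i$ when bad pairs carry at least half the mass), a bound that depends only on the radii and the multiset of colors, not on the geometry of the bad pairs.

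The matching lower bound on the seed solution is then obtained by the cross-color observation you correctly flagged as necessary: for any point $y$ in the seed solution and any two seeds $x,x'$ of color $i$, the triangle inequality gives $\dist(x,y)+\dist(x',y)\ge r_i$, hence summing over the $\binom{k_i}{2}$ seed pairs yields $\sum_{x\in\sol_i}\dist(x,y)\ge \frac{k_i}{2}r_i$; aggregating over the $k-k_i$ points of other colors and adding the within-color contribution $\binom{k_i}{2}r_i$ gives $\div(\mbox{seeds})\ge\frac{k}{4}\sum_i k_i r_i$ (using $k_i\ge 2$), which is within a factor $80$ of the upper bound on $\div(\opt)$. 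So your instinct that the within-color seed pairs alone cannot absorb a bad mass of order $k_i\cdot k\cdot r_i$ is right, and your identification of cross-color seed distances as the missing resource is the correct one --- but the "careful packing/charging argument exploiting the $\ge r_i$ separation to control the multiplicities $n_s$" is not what closes the gap; the closure is the pair of global inequalities above, neither of which references $\phi$, $s(\cdot)$, or multiplicities at all. As written, your proof is incomplete at precisely this point.
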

\begin{proof}
First note that it is easy to see that the size of each $S_i$ is at most $k_i^2$.

We will then show that the approximation factor is $O(1)$. Let $\opt_i=\opt\cap P_i$ be the points of color $i$ in the optimal solution of $P$. 
For each color $i\in[m]$, let us use $r_i$ to denote the minimum pairwise distance between the set of $k_i$ points returned by GMM$(P_i,k_i)$ in Line 
\ref{line:sum-gmm} 
of the algorithm. Further let us abuse the notation, and for a point $x\in P$, use $r_x$ to denote $r_{c_x}$ where $c_x\in[m]$ is the color of the point $x$.

Now let us also divide the optimal value of the diversity into two parts.
\begin{align*}
    \div(\opt) = \sum_{x,y\in \opt\colon \dist(x,y)<5\max\{r_x,r_y\}} \dist(x,y) + \sum_{x,y\in \opt\colon \dist(x,y)\geq 5\max\{r_x,r_y\}}\dist(x,y)
\end{align*}
Let $A:=  \sum_{x,y\in \opt\colon \dist(x,y)<5\max\{r_x,r_y\}} \dist(x,y)$ be the first term in the above summation and $B:= \sum_{x,y\in \opt\colon \dist(x,y)\geq 5\max\{r_x,r_y\}}\dist(x,y)$ denote the second term. Now we consider two cases separately.

\paragraph{Case 1: $A\geq B$.} In this case we have that 
\begin{align*}
\div(\opt)\leq 2\cdot A\leq 2\sum_{x,y\in \opt} 5\cdot \max\{r_x,r_y\}\leq 20\sum_{x,y\in \opt} r_x\leq 20\cdot k\sum_{i\in [m]} k_i r_i
\end{align*}
Now let us define a solution as follows. Let $\sol_i \subseteq S_i$ be the first set of $k_i$ points added to $S_i$ in Line 
\ref{line:sum-gmm}
of the algorithm using GMM. Then clearly the overall diversity of this solution is at least 
\begin{align*}
    \div(\sol)=&\sum_{i\in [m]}\sum_{x,y\in\sol_i} \dist(x,y) + \frac{1}{2}\cdot \sum_{i\in [m]}\sum_{y\in \sol\setminus \sol_i}\sum_{x\in\sol_i}\dist(x,y)\\
    &\geq \sum_{i\in[m]} {k_i \choose 2} r_i + 
    \frac{1}{2}\cdot \sum_{i\in [m]}\sum_{y\in \sol\setminus \sol_i} \frac{{k_i \choose 2}}{(k_i-1)} r_i
    \quad\quad \mbox{by the triangle inequality}\\
    &\geq \sum_{i\in[m]} r_i ({k_i\choose 2} + (k-k_i)\cdot\frac{k_i}{4})\\
    &\geq \frac{k}{4}\sum_{i\in[m]} k_i r_i + \sum_{i\in[m]} r_i( \frac{k_i^2}{4}-\frac{k_i}{2})\\
    &\geq \frac{k}{4}\sum_{i\in[m]} k_i r_i
\end{align*}
where to get the last inequality, we used the fact that for $k_i>1$ the second term is non-negative.
Therefore $\div(\sol)\geq \frac{1}{80}\cdot\div(\opt)$.

\paragraph{Case 2: $A<B$.} In this case we have that $\div(\opt)\leq 2B$.
We will show how to choose the solution $\sol$. Let $\opt_i=\opt\cap P_i$ as before. 
\begin{observation} There exists a one-to-one mapping $\mu:\opt_i\rightarrow S_i$ s.t. for each $o\in \opt_i$, $\dist(o,\mu(o))\leq 2r_i$.
\end{observation}
\begin{proof}
Let $S'_i\subseteq S_i$ be the first set of $k_i$ points added to $S_i$ in Line 
\ref{line:sum-gmm} 
of the algorithm.
For a point $p\in P_i$ let $n(p)=\argmin_{s'\in S_i'}\dist(p,s')$ be the nearest center in $S_i'$ to $p$. Also, for a center $s'\in S_i'$, let $N(s')=\{s\in S_i\colon n(s)=s'\}$ be the set of points in the core-set whose nearest center in $S_i'$ is $s'$, i.e., the points that have been added to the core-set in Line 
\ref{line:sum-additional} 
of the algorithm while processing $s'$. 
Finally let $D(s')=\{o\in \opt_i\colon n(o)=s'\}$ be the set of points of color $i$ in the optimal solution whose closest center in $S'$ is $s'$.
Note that because $\normX{\opt_i}=k_i$, we have $\normX{N(s')}\geq \normX{D(s')}$, as the algorithm keeps adding points to $S_i$ as long as such point exists, and fewer than $k_i$ points have been added to $N(s')$.
Therefore, we can always find a matching between $D(s')$ and $N(s')$ of size at least $\normX{D(s')}$. Thus such a $\mu$ exists that maps a point $o\in D(s')$ to its match in $N(s')$. By Property \ref{prop:gmm-3} 
of GMM we know that $\dist(o,s')\leq r_i$ and $\dist(s',\mu(o))\leq r_i$, and thus we get that $\dist(o,\mu(o))\leq 2r_i$.
\end{proof}
Now define $\sol_i\subseteq S_i$ to be $\{\mu(o)\colon o\in \opt_i\}$ and clearly $\normX{\sol_i}=k_i$. Then we have 
\begin{align*}
    \div(\sol)&\geq \sum_{x,y\in \opt\colon \dist(x,y)\geq 5\max\{r_x,r_y\}}\dist(\mu(x),\mu(y))\\
    &\geq \sum_{x,y\in \opt\colon \dist(x,y)\geq 5\max\{r_x,r_y\}}\dist(x,y)-\dist(x,\mu(x))-dist(y,\mu(y))\\ 
    & \geq\sum_{x,y\in \opt\colon \dist(x,y)\geq 5\max\{r_x,r_y\}}\dist(x,y)-2r_x-2r_y\\
    & \geq\sum_{x,y\in \opt\colon \dist(x,y)\geq 5\max\{r_x,r_y\}}\dist(x,y)/5\\
    &\geq B/5\geq \div(\opt)/10
\end{align*}
\end{proof}

\section{Approximation Algorithm for \FDM under \SumNNDist }\label{sec:opt-sum-min}
As mentioned in the results section, in order to obtain our core-set construction algorithm under \SumNNDist in Section \ref{sec:coreset-sum-min}, first in this section we show an approximation algorithm for \FDM under the \SumNNDist notion of diversity.

\subsection{Algorithm Description}
Given a colored point set $P=P_1\cup\cdots\cup P_m$, and $k_1,\ldots,k_m$, the goal is to find a solution $\sol=\sol_1 \cup \cdots \cup \sol_m$ where $\sol_i \subseteq P_i$, and $\normX{\sol_i}=k_i$, and that $\div(\sol)\geq \frac{1}{\alpha} \cdot \div_{k_1,\ldots,k_m}(P)$.
The approximation algorithm is shown in Algorithm \ref{alg:opt-sum-min}. 
The only unspecified part of the algorithm is in Line 
\ref{line:approach}, 
where we need to specify how we find the subset $B'$ given the set of balls $B$. One can take different approaches resulting in different trade-offs described below.

\begin{algorithm*}[htbp]
\caption{Approximation Algorithm for \FDM under \SumNNDist Notion of Diversity}
\label{alg:opt-sum-min}
\textbf{Input} a colored point set $P=P_1\cup\cdots\cup P_m$, and $k_1,\ldots,k_m$\\
\textbf{Output} a solution $\sol\subseteq P$ where $\normX{\sol\cap P_i}=k_i$, with approximately maximum diversity\\
\begin{algorithmic}[1]
\STATE{$\sol\leftarrow$ an arbitrary solution satisfying the color constraints, i.e., $\normX{\sol\cap P_i}=k_i$} \label{line:init_sumNN}
\FOR {$i\in [m]$}
    \STATE{$G_i=\{p_{1},\ldots,p_{k}\}\leftarrow$ GMM$(P_i,k)$, and let $r_1,\ldots,r_{k}$ be their corresponding radii.}\label{line:greedy}
    \FOR{$j=2$ {\bfseries to} $k$}
        \STATE{$S_1,\ldots,S_m\leftarrow\emptyset$}
        \STATE{Let $t(j)\geq j$ be the largest iteration where $r_{t(j)}\geq r_j/2$ and let $t(j)=k$ if no such iteration exists.}
        \STATE{Let $B=\{B_1,\ldots,B_{t(j)}\}$ be the balls of radius $r_{t(j)}/2$ around the points $p_1,\ldots,p_{t(j)}$\label{line:ball-B-define}}
        \STATE{\emph{Find} an (approximately) largest subset of balls $B'\subseteq B$ such that $P\setminus B'$ contains at least $k_i-\normX{B'}$ points from color $i$, and at least $k_{\ell}$ points from color $\ell$ for each color $\ell\neq i$.\label{line:approach}}
        \STATE{Add centers of $B'$ to $S_i$}
        \STATE{Add an arbitrary set of $k_i-\normX{B'}$ points from $P_i\setminus B'$ to $S_i$}
        \FOR{ $\ell\neq i$}
            \STATE{Add an arbitrary set of $k_{\ell}$ points from $P_{\ell}\setminus B'$ to $S_{\ell}$}
        \ENDFOR
        \IF{$S=\bigcup_i S_i$ is a valid solution \AND $\div(S)>\div(\sol)$}
            \STATE{$\sol\leftarrow S$}
        \ENDIF
    \ENDFOR
\ENDFOR

\STATE{{\bfseries return } $\sol$}
\end{algorithmic}
\end{algorithm*}

\subsubsection{Finding an optimal subset of balls (Line 
\ref{line:approach} 
of Algorithm \ref{alg:opt-sum-min})}
Given a colored point set $P$ and a set of disjoint balls $B$, the goal here is to find an (approximately) largest subset of balls $B'\subseteq B$, such that
\begin{itemize}
    \item For a pre-specified color $i\in[m]$, we have $\normX{P_i\setminus B'}\geq k_i-|B'|$,
    \item For all other colors $\ell\in[m]\setminus\{i\}$, we have $\normX{P_{\ell}\setminus B'}\geq k_{\ell}$.
\end{itemize}
Moreover, we may assume that $\normX{B}\leq t(j)\leq k$, and that we are interested only in subsets $B'$ of size upto $\normX{B'}\leq k_i$.

\paragraph{Approach 1.} If the algorithm is allowed to spend exponential time in $k$, one can find the optimal subset $B'$ in time $k^{k_i}\cdot m$ as stated below.

\begin{observation}[Approach 1]\label{obs:approach1}
There is an algorithm that finds the optimal largest subset $B'\subseteq B$ in time ${t_j \choose j}\cdot m \cdot j= k^{O(k_i)}$. 
\end{observation}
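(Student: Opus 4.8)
The plan is to establish \Cref{obs:approach1} by brute-force enumeration over candidate subsets $B' \subseteq B$, verifying the two feasibility conditions for each, and arguing the running time bound. First I would observe that the constraint $\normX{B'} \leq k_i$ together with $\normX{B} \leq t(j) \leq k$ limits the number of subsets we must consider: we enumerate all subsets of $B$ of size at most $k_i$, of which there are $\sum_{s=0}^{k_i} \binom{t(j)}{s} = O\!\left(\binom{t(j)}{k_i}\right)$ when $k_i \leq t(j)/2$ (and which is trivially bounded by $2^{t(j)} \leq k^{O(k_i)}$ in any case). The key combinatorial point is that $\binom{t(j)}{k_i} \leq t(j)^{k_i} \leq k^{k_i}$, which gives the stated $k^{O(k_i)}$ bound.

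Next I would address the verification cost for a single candidate $B'$. For the chosen color $i$ we must check $\normX{P_i \setminus B'} \geq k_i - \normX{B'}$, and for every other color $\ell \neq i$ we must check $\normX{P_\ell \setminus B'} \geq k_\ell$. Counting how many points of each color fall outside the (at most $k$) balls in $B'$ can be done in a single pass, so per-candidate verification costs $O(m)$ for the $m$ color-checks after an $O(nk)$ preprocessing step that records, for each point, which balls contain it. Since the balls in $B$ are disjoint (as noted in the problem statement preceding the observation), each point lies in at most one ball, which simplifies the bookkeeping: we can precompute for each ball $B_t$ and each color $c$ the count $\normX{P_c \cap B_t}$, and then $\normX{P_\ell \setminus B'}$ is just $n_\ell$ minus the sum of these counts over balls in $B'$. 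Multiplying the number of candidates $\binom{t(j)}{j}$ (writing $j$ in place of $k_i$ as the paper's notation does, since $B'$ has size bounded by the relevant parameter) by the $O(m \cdot j)$ verification-and-summation cost yields the claimed $\binom{t_j}{j} \cdot m \cdot j$ time.

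Finally I would confirm that exhaustive enumeration indeed returns the \emph{optimal} largest feasible $B'$: since we test every subset up to the size limit and the feasibility conditions are exactly the two stated constraints, the largest feasible subset encountered is by definition optimal. The main obstacle—though a mild one—is reconciling the notation in the bound: the statement writes $\binom{t_j}{j}$ and $k^{O(k_i)}$, so I must be careful to explain why enumerating subsets of size up to $k_i$ (the true size cap on $B'$) is captured by $\binom{t(j)}{j}$; this rests on the relationship between $j$, $t(j)$, and $k_i$ established in the surrounding algorithm, namely that we only ever seek $B'$ of size comparable to $j$ and that $t(j) \leq k$. Once that identification is made explicit, the counting and the time bound follow immediately, and no delicate argument is required.
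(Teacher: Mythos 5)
Your proposal is correct and matches the paper's proof: both brute-force over all subsets $B'$ of size at most $k_i$, verify the per-color constraints by counting points of each color inside $B'$ in $O(\normX{B'}\cdot m)$ time, and bound the total by $\sum_{z\le k_i}\binom{t(j)}{z}\cdot z\cdot m\le m\cdot k^{k_i}$. Your additional remarks on precomputing per-ball per-color counts via disjointness are a harmless implementation refinement, not a different argument.
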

\begin{proof}
One can brute force on each possible subset $B'$, and count the number of points from each color that falls in $B'$, and thus compute the number of remaining points for each color, i.e., $\normX{P_{\ell}\setminus B'}$ in $O(\normX{B'}\cdot m)$ time and check whether the number exceeds $k_{\ell}$. So the total runtime is at most $\sum_{z=1}^{k_i} \left({t(j)\choose z}\cdot z\cdot m\right) \leq m\cdot k^{k_i}$.
\end{proof}

\paragraph{Approach 2.} One can get an approximate solution in polynomial time if the number of points of each color is sufficiently large, i.e, for each $\ell\in[m]$, we have $n_{\ell}=\normX{P_{\ell}}\geq 2\cdot k_{\ell}$. Note that this is a realistic assumption (also appeared in \cite{chandra2001approximation}), as we expect the number of data points to be much larger than the target size for the summary. Further, this assumption can be relaxed to $n_{\ell}$ being larger than a constant times $k_{\ell}$ similar to \cite{chandra2001approximation}.

\begin{lemma}\label{lem:approach3}
There is an algorithm that finds an $O(m)$-approximate largest subset $B'\subseteq B$ in polynomial time, under the assumption that for each $\ell\in[m]$, we have $n_{\ell}\geq 2\cdot k_{\ell}$. 
\end{lemma}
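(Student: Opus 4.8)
The plan is to reformulate the task as one of \emph{removing} balls and to solve it by a halving scheme, as sketched in the overview. Writing $c_\ell(b)$ for the number of color-$\ell$ points inside ball $b$ (well-defined since the balls in $B$ are disjoint), the constraint for a color $\ell\neq i$ reads $\sum_{b\in B'} c_\ell(b)\le n_\ell-k_\ell$, and the color-$i$ constraint rearranges to $\sum_{b\in B'}(c_i(b)-1)\le n_i-k_i$, where I use that every ball is centered at a color-$i$ point (it comes from GMM on $P_i$), so $c_i(b)\ge 1$. Two facts drive the whole argument. First, all $m$ constraints are \emph{monotone under removal}: deleting a ball from $B'$ only decreases every left-hand side, so once a color is satisfied it stays satisfied, and any subset of a feasible $B'$ is feasible. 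Second, the assumption $n_\ell\ge 2k_\ell$ supplies slack: the budget $n_\ell-k_\ell\ge n_\ell/2$, while $\sum_b c_\ell(b)\le n_\ell$, so the total color-$\ell$ mass is at most twice the budget.

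I would start from $B'=B$ and run $O(\log m)$ rounds, maintaining the invariant that after $t$ rounds at most $m/2^{t}$ colors are still violated while at least $|B|/2^{t}$ balls remain. The heart of the matter is the per-round claim: if $b$ balls remain and $v$ colors are still violated, one can retain a subset $K$ of at least $b/2$ of them so that at least $v/2$ of the violated colors become satisfied. For a \emph{single} color $\ell$ this is easy: keeping the $\lceil b/2\rceil$ balls of smallest $c_\ell(\cdot)$ retains at most half the total color-$\ell$ mass, i.e.\ at most $n_\ell/2\le n_\ell-k_\ell$, so $\ell$ is satisfied by the slack. The difficulty, and the step I expect to be the main obstacle, is that different colors prefer different low-mass halves, so I must exhibit one retained set $K$ that simultaneously works for a constant fraction of the colors.

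To overcome this I would retain a uniformly random half of the balls. For each still-violated color the expected retained mass is half of its total, which by the slack is at most $n_\ell/2\le n_\ell-k_\ell$, i.e.\ within budget; hence each color is individually satisfied with probability at least $1/2$, so in expectation at least $v/2$ of the violated colors are satisfied while exactly $b/2$ balls are kept. Fixing such an outcome (or derandomizing by conditional expectations) gives the per-round claim. Turning ``expected half'' into a clean guarantee at the tight ratio $n_\ell=2k_\ell$ is the delicate point; replacing the constant $2$ by a larger one, which the paper already permits, makes the retained mass concentrate (Chernoff) and lets a union bound over the $m$ colors establish the claim deterministically with room to spare.

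Finally I would assemble the pieces. After $O(\log m)$ rounds every color constraint is satisfied and, by the invariant, at least $|B|/m$ balls survive; by monotonicity I may discard balls down to the cap $|B'|\le k_i$ without losing feasibility. Since every feasible subset lies inside $B$ we have $\opt\le |B|$, and also $\opt\le k_i$; hence the produced $B'$ has size at least $\min(|B|/m,\,k_i)\ge \opt/m$, an $O(m)$-approximation computable in polynomial time. As a sanity check and an alternative route that sidesteps the slack assumption, the same bound---indeed $\opt-O(m)$---follows by solving the packing LP $\max\sum_b x_b$ subject to the $m+1$ constraints above and rounding its at most $m+1$ fractional coordinates down, which preserves feasibility.
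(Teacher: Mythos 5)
Your overall strategy is the same as the paper's: an $O(\log m)$-round halving scheme in which each round keeps about half of the surviving balls while permanently satisfying at least half of the still-violated colors, with the assumption $n_\ell\ge 2k_\ell$ supplying exactly the factor-two slack that makes ``one half of the mass'' fit the budget $n_\ell-k_\ell$. The paper phrases this with decreasing \emph{demands} $d_\ell$ and the invariant that the surviving balls always contain at least $2d_\ell$ points of each unsatisfied color, whereas you keep the budgets fixed; these are equivalent bookkeepings.

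The genuine gap is in your per-round claim. From ``the expected retained mass is at most the budget'' you conclude ``each violated color is satisfied with probability at least $1/2$,'' but a mean below a threshold does not put the median below it: a retained mass that is $0$ with probability $0.4$ and $1.5\,(n_\ell-k_\ell)$ with probability $0.6$ has expectation below budget yet violates the constraint with probability $0.6$. The statement you want is nevertheless true, for a different reason: for \emph{any} partition of the surviving balls into two halves $A,\bar A$, at least one of the two carries at most half of the color-$\ell$ mass, hence at most $n_\ell/2\le n_\ell-k_\ell$; so each violated color ``votes'' for one half, by pigeonhole at least half of them vote for the same half, and retaining that half proves the round claim deterministically at the tight ratio $n_\ell=2k_\ell$ --- no Chernoff and no strengthening of the hypothesis needed. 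This is precisely the paper's $t_A$ versus $t_{\bar A}$ majority argument. Two smaller points: the $\lceil b/2\rceil$ balls of smallest $c_\ell$ can carry \emph{more} than half the mass when $b$ is odd (three equal balls: the two smallest carry $2/3$ of it), so use the pigeonhole form rather than the ``smallest half'' form; and the paper additionally brute-forces a feasible subset of constant size to cover the regime where $\opt$ is small relative to $m$ and repeated halving could bottom out --- your ceiling-based accounting only partially covers this and is worth making explicit. Your closing LP aside (a basic optimal solution of the packing LP has at most $m+1$ fractional coordinates, which can be rounded down) is a genuinely different and attractive route: it gives an additive $m+1$ loss and does not need $n_\ell\ge 2k_\ell$ at all, though it too must be paired with a brute-force check for small $\opt$.
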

\begin{proof}
First one can check if there exists a subset $B'_0$ of size $2=O(1)$ in polynomial time using Approach 1.

Next we construct a solution $B_1$. For each $\ell\in [m]$, let $d_{\ell}=k_{\ell}$ be the demand of color $\ell$, that is the total number of points of that color which we want to exist outside of our solution $B_1$.
First WLOG we assume that all points of $P$ lie inside one of the balls in $B$ (if this is not the case, one can subtract the number of points of each color that are outside of all the balls from the demands $d_{\ell}$).
Then we start with $B_1=B$ and repeat the following process as long as there are non-zero demands left.
\begin{itemize}
    \item We let $A$ be an arbitrary subset of half of the balls in $B_1$ and let $\bar{A} = B_1\setminus A$.
    \item Note that for each color $\ell\in [m]$ with non-zero demand $d_{\ell}>0$, either $A$ or $\bar{A}$ contains at least $d_{\ell}$ points of color $\ell$ in them. This is initially true as one of $A$ or $\bar{A}$ should contain half of the points that is $n_{\ell}/2\geq k_{\ell}\geq d_{\ell}$. We show this remains true at the end of the process.
    \item Let $t_A$ be the number of colors $\ell\in [m]$ with non-zero demands (i.e., $d_{\ell}>0$) such that $A$ contains more points of color $\ell$ than $\bar{A}$, and define $t_{\bar{A}}$ similarly as the number of colors $\ell\in [m]$ with non-zero demands such that $\bar{A}$ contains more points of color $\ell$ than $A$. \item Now we let $B_1=A$ if $t_{\bar{A}}\geq t_{A}$ and let $B_1=\bar{A}$ otherwise. This ensures that the demand of at least half of the colors with non-zero demand has been satisfied as the set we are excluding from $B_1$ satisfies the demand of at least half of the colors with non-zero demands. This means that the total number of iterations is at most $\log m$.
    \item We will update the demands of all the colors based on the set that was excluded from $B_1$. WLOG assume that $t_A\geq t_{\bar{A}}$. Then for each $\ell$ that $d_{\ell}>0$ we set $d_{\ell}=\max\{0,d_{\ell}-\normX{A\cap P_{\ell}}\}$.
    
    Let $\ell$ be a color that has still a non-zero demand. Note that this means that $\normX{A\cap P_{\ell}}\leq d_{\ell}$. Therefore we will still have the property that $\normX{\bar{A}\cap P_{\ell}} = \normX{(B_1\setminus A)\cap P_{\ell}}\geq 2d_{\ell}-\normX{A\cap P_{\ell}}\geq 2(d_{\ell}-\normX{A\cap P_{\ell}})$, where here we used $B_1$ and $d_\ell$ to denote their values before we apply the update corresponding to this iteration.

\end{itemize}
Finally, we return the best of $B_0$ and $B_1$ as $B'$, i.e., the one with maximum size. Note that if
$\opt\leq 2m$, then $B_0$ itself is an $O(m)$ approximation. 
Otherwise, if $\opt>2m$, then as the total number of iterations to obtain $B_1$ is at most $\log m$, we have $\normX{B'} = \normX{B_1} \geq  \normX{B}/2^{\log m} = \normX{B}/m \geq \opt/m$.
\end{proof}

\subsection{Algorithm Analysis} 
Let us now describe the intuition behind Algorithm \ref{alg:opt-sum-min} along with defining a set of notations. Let $\opt\subseteq P$ be the optimal solution and let $\opt_i= \opt \cap P_i$ be the points of color $i$ in the optimal solution. Moreover let $d^i_1,\ldots,d^i_{k_i}$ be the values of the optimal solution, i.e., $d^i_j = \dist(p^i_j,\opt\setminus\{p^i_j\})$, where $p^i_j$'s are the points in $\opt_i$. Therefore, we have that $\div_{k_1,\ldots,k_m}(P)=\div(\opt)=\sum_{i\leq m}\sum_{j\leq k_i} d^i_j$.

Let $i^*\leq m$ be the color with the maximum contribution in the optimal solution, i.e., $i^*=\argmax_{i\in[m]}\sum_{j\leq k_i} d^i_j$. Clearly, $\sum_{j\leq k_{i^*}} d^{i^*}_j \geq \div(\opt)/m$. Moreover,  wlog assume that for each $i$, $d^i_j$'s are sorted, i.e., $d^i_1\geq\cdots\geq d^i_{k_i}$. Now, let $j^*(i)$ be defined as $j^*(i)=\argmax_{j\leq k_i} j\cdot d^i_j$, and define $j^*=j^*(i^*)$.

\begin{claim}\label{clm:max-j}
Let $i\in[m]$ be any color. Then we have $j^*(i)\cdot d^i_{j^*(i)}\geq \frac{1}{\lceil\log (k_i+1)\rceil} \sum_{j\leq k_i} d^i_j$. 
\end{claim}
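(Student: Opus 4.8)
The plan is to fix the color $i$, drop the superscript, and reduce the claim to a clean statement about a sorted sequence. Write $k=k_i$ and set $d_j := d^i_j$, so that $d_1\geq d_2\geq\cdots\geq d_k\geq 0$, and let $M := j^*(i)\cdot d^i_{j^*(i)} = \max_{1\leq j\leq k} j\cdot d_j$. The claim is then exactly the inequality $\sum_{j=1}^k d_j \leq \lceil\log(k+1)\rceil\cdot M$ (with $\log$ to base $2$). The guiding observation is that $M$ is the largest ``area'' $j\cdot d_j$ of a rectangle sitting under the staircase of the sorted sequence, so $d_j \leq M/j$ for every $j$; the goal is to show that this single best rectangle already captures a $1/\lceil\log(k+1)\rceil$ fraction of the total area $\sum_j d_j$.

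First I would group the indices dyadically: for $t=0,1,2,\ldots$, let $B_t = \{\, j : 2^t \leq j < 2^{t+1}\,\}\cap\{1,\ldots,k\}$. These blocks partition $\{1,\ldots,k\}$. For each nonempty block $B_t$, its smallest index is $2^t\leq k$, so $d_{2^t}$ is defined; since $|B_t|\leq 2^t$ and $d_j \leq d_{2^t}$ for every $j\in B_t$ by monotonicity of the $d_j$'s, I obtain $\sum_{j\in B_t} d_j \leq 2^t\cdot d_{2^t}$. But $2^t\cdot d_{2^t}$ is of the form $j\cdot d_j$ with $j=2^t$, hence at most $M$ by the very definition of $M$. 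Thus each block contributes at most $M$.

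Next I would count the nonempty blocks. They are $B_0,\ldots,B_T$, where $T$ is the smallest integer with $2^{T+1}-1\geq k$, i.e. $2^{T+1}\geq k+1$; the smallest such exponent is $T+1 = \lceil\log(k+1)\rceil$. Summing the per-block bound over all blocks gives $\sum_{j=1}^k d_j = \sum_{t=0}^{T}\sum_{j\in B_t} d_j \leq (T+1)\cdot M = \lceil\log(k+1)\rceil\cdot M$, which rearranges to the claimed inequality.

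I do not expect a genuine obstacle; the only delicate point is pinning the constant exactly. A cruder route bounds $\sum_j d_j \leq M\sum_{j=1}^k \tfrac1j$ directly from $d_j\leq M/j$, but one then has to argue that the harmonic number is at most $\lceil\log(k+1)\rceil$, which is true yet lossy and opaque. The dyadic decomposition is preferable precisely because the number of blocks is \emph{exactly} $\lceil\log(k+1)\rceil$, so it yields the stated constant with no slack. The remaining bookkeeping is to verify that every nonempty block's leading index $2^t$ does not exceed $k$ (so $d_{2^t}$, and hence $2^t d_{2^t}\leq M$, makes sense), and that the number of blocks required to cover $\{1,\ldots,k\}$ is precisely $\lceil\log(k+1)\rceil$ rather than off by one.
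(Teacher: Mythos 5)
Your proof is correct and is essentially identical to the paper's: the paper also partitions $\{1,\ldots,k_i\}$ into the dyadic intervals $I_\ell=[2^{\ell-1},\min\{2^\ell,k_i+1\})$ for $1\leq\ell\leq\lceil\log(k_i+1)\rceil$, bounds each interval's contribution by $2^{\ell-1}\cdot d^i_{2^{\ell-1}}\leq j^*(i)\cdot d^i_{j^*(i)}$ using monotonicity, and sums over the $\lceil\log(k_i+1)\rceil$ intervals. Your blocks $B_t$ are exactly these intervals with $t=\ell-1$, so there is nothing to add.
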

\begin{proof}
Divide the interval $[1,k_i]$ into logarithmically many number of intervals such that for each $1\leq \ell\leq \lceil\log (k_i+1)\rceil$, $I_{\ell}=[2^{\ell-1},\min\{2^{\ell},k_i+1\})$. Then for each $\ell$, we have that 
\[
\sum_{j\in I_{\ell}} d^i_j\leq 2^{\ell-1}\cdot d^i_{2^{\ell-1}} \leq j^*(i)\cdot d^i_{j^*(i)}
\]
and therefore, $\sum_{j\leq k_i}d^i_j \leq \lceil\log (k_i+1)\rceil \cdot j^*(i)\cdot d^i_{j^*(i)}$.
\end{proof}
\begin{theorem}\label{thm:opt-sum-min}
Algorithm \ref{alg:opt-sum-min} using Approach 2, runs in polynomial time and produces a solution with an approximation factor of $O(m^2\cdot\log k)$. 
\end{theorem}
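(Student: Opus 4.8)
The plan is to charge the optimal value to a single color, then to a single ``level'' of that color, and finally to the solution the algorithm builds at the matching iteration. Concretely, I would fix the most significant color $i^*$, so that $\sum_{j\le k_{i^*}} d^{i^*}_j \ge \div(\opt)/m$, and then apply Claim~\ref{clm:max-j} (using $\lceil\log(k_{i^*}+1)\rceil=O(\log k)$) to collapse the sum to a single term: it suffices to build a solution $S$ with $\div(S)\ge \Omega(j^* d^{i^*}_{j^*}/m)$, since then $\div(S)\ge \Omega\big(\tfrac1m\cdot\tfrac{1}{\log k}\sum_j d^{i^*}_j\big)\ge \Omega(\div(\opt)/(m^2\log k))$. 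Because Algorithm~\ref{alg:opt-sum-min} enumerates every pair $(i,j)$ and keeps the best solution found, it is enough to analyze the single iteration $i=i^*$, $j=j^*$.

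Next I would tie the GMM radii to the optimal contributions. The top $j^*$ contributors of $\opt_{i^*}$ are pairwise at distance $\ge d^{i^*}_{j^*}$ (each has nearest-neighbor distance $\ge d^{i^*}_{j^*}$ in all of $\opt$), so by Property~\ref{prop:gmm-3} their minimum pairwise distance is at most $2r_{j^*}$, giving $d^{i^*}_{j^*}\le 2r_{j^*}$ and hence a ball radius $\rho:=r_{t(j^*)}/2\ge r_{j^*}/4\ge d^{i^*}_{j^*}/8$. I would then verify that, whatever subset $B'$ the algorithm selects, the solution it builds satisfies $\div(S)\ge \normX{B'}\cdot\rho$: the centers of $B'$ are GMM points, hence pairwise $\ge r_{t(j^*)}=2\rho$ apart, while every other point placed in $S$ is drawn from $P\setminus B'$ and so lies at distance $>\rho$ from each center; thus each of the $\normX{B'}$ centers contributes at least $\rho$ to the \SumNNDist objective.

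The heart of the argument, and the step I expect to be the main obstacle, is to show that the ball-subset problem in this iteration has a large feasible solution, i.e.\ $\opt_B\ge \Omega(j^*)$. Combined with Lemma~\ref{lem:approach3} (Approach~2) this yields $\normX{B'}\ge \opt_B/O(m)=\Omega(j^*/m)$ and therefore $\div(S)\ge \Omega(j^*/m)\cdot\rho=\Omega(j^* d^{i^*}_{j^*}/m)$, as required. To exhibit a feasible $B'$ of size $\Omega(j^*)$ I would use the optimal solution itself as the certificate that ``enough points lie outside $B'$'': by the choice of $t(j^*)$ we have $r_{t(j^*)+1}<r_{j^*}/2$, so by Property~\ref{prop:gmm-2} the $t(j^*)$ centers form an $(r_{j^*}/2)$-covering of $P_{i^*}$ while remaining pairwise $\ge r_{t(j^*)}$ separated; I would map the $j^*$ far-apart optimal contributors of color $i^*$ to (essentially distinct) balls and argue that removing those balls still leaves all $k_\ell$ points of each color $\ell\ne i^*$ and $k_{i^*}-\normX{B'}$ points of color $i^*$ outside them. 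The delicate points are (i) that in a general metric a single ball may contain several contributors, so the map to balls need not be injective, and (ii) that $\rho$ is only within a constant factor of the covering radius, so a removed ball could in principle swallow optimal points of other colors; handling both requires combining the separation $d^{i^*}_{j^*}$ of the contributors with the assumption $n_\ell\ge 2k_\ell$, and I would isolate this as a standalone lemma lower-bounding $\opt_B$.

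Finally I would assemble the pieces and bound the running time: the algorithm performs $O(mk)$ outer iterations, each running GMM once and invoking Approach~2, which is polynomial by Lemma~\ref{lem:approach3}; hence the total running time is polynomial. Taking the best $S$ over all iterations gives $\div(\sol)\ge \div(S)\ge \Omega(\div(\opt)/(m^2\log k))$, which is the claimed $O(m^2\log k)$ approximation.
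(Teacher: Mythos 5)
Your skeleton is the paper's: reduce to the most significant color $i^*$, use Claim~\ref{clm:max-j} to pass to the single term $j^*\cdot d^{i^*}_{j^*}$, and lower-bound the solution built at one iteration by $\normX{B'}\cdot\rho$; the individual steps you do carry out are sound (in particular $d^{i^*}_{j^*}\le 2r_{j^*}$ via Property~\ref{prop:gmm-3}, hence $\rho\ge d^{i^*}_{j^*}/8$, is exactly the paper's radius bound, and the accounting $\div(S)\ge\normX{B'}\rho$ is correct). The genuine gap is in the step you yourself flag as the main obstacle, and the route you sketch for it fails \emph{at the iteration you chose}. You analyze $j=j^*$ and certify a large feasible ball subset by mapping the $j^*$ well-separated optimal contributors to their nearest centers. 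For that map to be (near-)injective, and for the selected balls to avoid the remaining optimal points, you need the covering radius of the $t(j^*)$ centers to be $O(d^{i^*}_{j^*})$. But at iteration $j^*$ all you get is $r_{t(j^*)+1}<r_{j^*}/2$, and $r_{j^*}$ is only bounded \emph{below} by $d^{i^*}_{j^*}/2$: it can be arbitrarily larger, e.g.\ when $P_{i^*}$ contains $j^*$ mutually very distant points that the optimum avoids because mandatory points of other colors sit next to them. Then the balls have radius $\approx r_{j^*}/4\gg d^{i^*}_{j^*}$, all $j^*$ contributors can land in a single ball, and one ball can swallow arbitrarily many optimal points of every color -- so your ``delicate points'' (i) and (ii) are not technicalities to be patched with $n_\ell\ge 2k_\ell$; they are fatal at $j=j^*$. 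The missing idea is the paper's choice of iteration: it takes $j$ to be essentially the largest index in $[j^*,k]$ with $r_j\ge d^{i^*}_{j^*}$ (incremented once when $r_{j+1}\ge d^{i^*}_{j^*}/2$), which forces $r_{t(j)+1}<d^{i^*}_{j^*}/2$. At that iteration every point of $P_{i^*}$ lies within $d^{i^*}_{j^*}/2$ of a center, the nearest-center map on the contributors is injective, no other optimal point enters the chosen balls, and the ball radius is still $\ge d^{i^*}_{j^*}/8$; this is Claim~\ref{clm:large-subset-exists}.

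A remark on a possible repair that is not the one you propose: for Approach~2 specifically, the iteration $j=j^*$ could be salvaged without any feasibility claim, because the halving procedure in the proof of Lemma~\ref{lem:approach3} always terminates with a \emph{feasible} set $B_1$ of size at least $\normX{B}/2^{\lceil\log m\rceil}\ge t(j^*)/(2m)\ge j^*/(2m)$ regardless of the true optimum of the ball-selection problem, and combined with $\rho\ge d^{i^*}_{j^*}/8$ this already yields the $O(m^2\log k)$ factor. But this is not the argument you wrote (you explicitly route through a lower bound on the optimal ball subset), and it does not extend to Approach~1 / Theorem~\ref{thm:opt-exp-sum-min}, where the $O(m\log k)$ bound genuinely needs the optimal ball subset to have size $\ge j^*$ at the paper's choice of $j$. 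Your runtime paragraph is fine.
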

\begin{proof}
First note that, it can be easily checked that the algorithm runs in polynomial time by Lemma \ref{lem:approach3}.
Next, consider the iteration of Algorithm \ref{alg:opt-sum-min} corresponding to $i$ and $j$ where $i=i^*$ and $j$ is chosen as follows,
\begin{enumerate}
    \item $j$ is the largest value in the range $j^*\leq j \leq k$ such that $r_j\geq d^{i^*}_{j^*}$
    \item If no such $j$ exists, we let $j=j^*$. 
    \item \label{item:increase-j} Finally, if $j^*<j<k$ and $r_{j+1}\geq d^{i^*}_{j^*}/2$, increase $j$ by one, i.e., $j=j+1$.
\end{enumerate}
In all of the above cases, either by Property \ref{prop:gmm-3} of the GMM algorithm, or by definition, we have that $r_j\geq d^{i^*}_{j^*}/2$. Therefore, $r_{t(j)}\geq d^{i^*}_{j^*}/4$. Therefore the balls in $B$ defined in Line \ref{line:ball-B-define} of the algorithm, have radius at least $d^{i^*}_{j^*}/8$. 

\begin{claim}\label{clm:large-subset-exists}
Let $B''\subseteq B$ be the maximum size subset of the balls s.t. $P\setminus B''$ contains $k_i-\normX{B''}$ points of color $i$, and at least $k_{\ell}$ number of points from color $\ell$, for each $\ell\neq i$. Then $\normX{B''}\geq j^*$. 
\end{claim}
\begin{proof}
Note that $\opt_{i^*}$ contains a subset $R$ of $j^*$ points from $P_{i^*}$ such that there are no other points from $\opt$ in the ball of radius $d^{i^*}_{j^*}$ around the points in $R$.
Now, we consider three cases separately.

\paragraph{Case 1: $j=k$.} Note that in this case we also have that $t(j)=k$. Therefore, $B$ contains $k$ disjoint balls. Obtain $B''$ from $B$ as follows. First set $B''=B$. Then for each color $\ell\neq i$, remove at most $k_{\ell}$ balls from $B''$ that contain the points of color $\ell$ in the optimal solution. Since $B$ has size $k=\sum_{\ell} k_{\ell}$, in the end $B''$ will have size at least $k_i\geq j^*$.

\paragraph{Case 2: $j<k$ and $t(j)=k$.} Again in this case $B$ contains $k$ disjoint balls and one can obtain $B''$ with size at least $j^*$ similar to the above case.

\paragraph{Case 3: $j<k$ and $t(j)<k$.} For this case, let us first prove the following simple claim.
\begin{claim}
$r_{t(j)+1}<d^{i^*}_{j^*}/2$.
\end{claim}
\begin{proof}
First note that by the definition of $t(j)$, we know that $r_{t(j)+1}<r_j/2$. Now if we have $r_j\leq d^{i^*}_{j^*}$ (this happens when we increased the value of $j$ in the last step), then we have that $r_{t(j)+1}<r_j/2\leq d^{i^*}_{j^*}/2$ and the claim is proved. Otherwise, it means that $r_{j+1}<d^{i^*}_{j^*}/2$ (since we did not increase $j$ in Item \ref{item:increase-j}). But then since $t(j)\geq j$, we have that $t(j)+1\geq j+1$. Therefore, $r_{t(j)+1}< d^{i^*}_{j^*}/2$ and the claim is proved again.
\end{proof}
Therefore, all the points in $P$ lie within a distance of at most $r_{t(j)+1}< d^{i^*}_{j^*}/2$ of one of the centers of the balls in $B$. 
For each point $q\in R$ let $N(q)$ be the closest center among the centers of $B$ to $q$. Note that for two points $q_1,q_2\in R$, $N(q_1)\neq N(q_2)$ as otherwise their distance is less than $2r_{t(j)+1} < d^{i^*}_{j^*}$ contradicting the definition of the set $R$. Let $B''$ be the subset of the $j^*$ balls centered at $N(R)$. Note that for any other point in the optimal solution, i.e., $q\in \opt\setminus R$, it cannot lie inside any of the balls in the subset $B''$, as again it contradicts the definition of $R$. This means that the set $P\setminus B''$ contains enough points: i.e., $k_i-|R|$ points from color $i$, and $k_{\ell}$ points from color $\ell$ for each $\ell\neq i$.
\end{proof}
Therefore, having Claim \ref{clm:large-subset-exists} and using Lemma \ref{lem:approach3}, we get that $|B'|\geq j^*/m$.
Now, let $S_1,\ldots,S_m$ be the solution created at the iteration corresponding to $i$ and $j$ as specified above. That is, $S_i$ contains the centers of the balls in $B'$ returned by Approach 2, together with an arbitrary set of $k_i-\normX{B'}$ points from $P_i\setminus B'$. Further, for each $\ell\neq i$, let $S_{\ell}$ contain an arbitrary set of $k_{\ell}$ points from $P_{\ell}\setminus B'$. This solution contains a set of $\normX{B'}\geq j^*/m$ points such that no other point in $S$ lies within a distance of $r_{t(j)}/2\geq d^{i^*}_{j^*}/8$.
Now, to see the approximation, note that 
\begin{align*}
    &\div(\sol)\geq\div(S)\\
    &\geq (j^*/m)\cdot d^{i^*}_{j^*}/c_1 \quad\mbox{by what we just proved}\\
    &\geq \frac{1}{m\cdot c_1 \cdot \lceil\log (k_{i^*}+1)\rceil} \sum_{j'\leq k_{i^*}} d^{i^*}_{j'}
    \quad\quad \mbox{by Claim \ref{clm:max-j}}
    \end{align*}
    \begin{align*}
    &\geq \frac{1}{m^2\cdot c_1\cdot \lceil\log (k_{i^*}+1)\rceil} \sum_{i'\leq m} \sum_{j'\leq k_{i'}} d^{i'}_{j'} \quad \mbox{by choice of $i^*$}\\
    &= \frac{1}{m^2\cdot c_1 \cdot \lceil\log k\rceil} \cdot \div(\opt) \quad\quad\mbox{by definition of $d^{i'}_{j'}$'s}
\end{align*} 
\end{proof}
Similarly, running Algorithm \ref{alg:opt-sum-min} with Approach 1 would give us the following theorem.
\begin{theorem}\label{thm:opt-exp-sum-min}
There exists an $O(m\cdot\log k)$ approximation algorithm that runs in time $O(k^k\cdot poly(m,k))$. 
\end{theorem}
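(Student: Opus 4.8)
The plan is to run Algorithm \ref{alg:opt-sum-min} but to instantiate Line \ref{line:approach} with Approach 1 (Observation \ref{obs:approach1}) instead of Approach 2, and then to reuse essentially the entire analysis of Theorem \ref{thm:opt-sum-min}. The only structural change is that Approach 1 returns an \emph{exact} largest subset $B'$ rather than an $O(m)$-approximate one, and this is precisely what removes one of the two factors of $m$. This also drops the assumption $n_\ell\geq 2k_\ell$ required by Lemma \ref{lem:approach3}, which Approach 1 does not need.

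Concretely, I would fix the same color $i=i^*$ and the same index $j$ (chosen by the three-case rule in the proof of Theorem \ref{thm:opt-sum-min}), so that the balls in $B$ again have radius at least $d^{i^*}_{j^*}/8$. The key point is that Claim \ref{clm:large-subset-exists} is stated for the \emph{true} optimum $B''$ and is agnostic to how $B'$ is actually computed; it still guarantees $\normX{B''}\geq j^*$. Since Approach 1 finds the optimal subset exactly, we now obtain $\normX{B'}=\normX{B''}\geq j^*$, as opposed to the weaker $\normX{B'}\geq j^*/m$ that Lemma \ref{lem:approach3} yields. The solution $S$ built from $B'$ then contains $\normX{B'}\geq j^*$ points, each of whose nearest neighbor in $S$ is at distance at least $r_{t(j)}/2\geq d^{i^*}_{j^*}/8$, so the approximation chain becomes
\begin{align*}
\div(\sol)\geq\div(S) &\geq j^*\cdot d^{i^*}_{j^*}/c_1 \\
&\geq \frac{1}{c_1\cdot\lceil\log (k_{i^*}+1)\rceil}\sum_{j'\leq k_{i^*}} d^{i^*}_{j'}
\quad\quad\mbox{by Claim \ref{clm:max-j}}\\
&\geq \frac{1}{m\cdot c_1\cdot\lceil\log k\rceil}\sum_{i'\leq m}\sum_{j'\leq k_{i'}} d^{i'}_{j'}
= \frac{1}{O(m\cdot\log k)}\cdot\div(\opt),
\end{align*}
where the last line uses the maximality of $i^*$. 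This gives the claimed $O(m\cdot\log k)$ factor, a factor of $m$ better than Theorem \ref{thm:opt-sum-min}.

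For the running time, I would bound the cost of a single invocation of Approach 1 by $m\cdot k^{k_i}\leq m\cdot k^{k}$ using Observation \ref{obs:approach1}, and note that Algorithm \ref{alg:opt-sum-min} makes only $O(mk)$ such invocations, one for each pair $(i,j)$ with $i\in[m]$ and $2\leq j\leq k$. Multiplying gives a total of $O(mk)\cdot m\cdot k^k = O(k^k\cdot poly(m,k))$, matching the statement.

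Since every ingredient is already in place, there is no genuine obstacle here beyond careful bookkeeping; the theorem is effectively a corollary of the analysis of Theorem \ref{thm:opt-sum-min}. The one point worth double-checking is exactly that Claim \ref{clm:large-subset-exists} refers to the optimal $B''$, so that exactness of Approach 1 directly delivers $\normX{B'}\geq j^*$ rather than $j^*/m$ — this is the sole place the factor of $m$ is saved, and everything downstream is unchanged.
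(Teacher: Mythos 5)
Your proposal is correct and matches the paper's intended argument exactly: the paper proves Theorem \ref{thm:opt-exp-sum-min} by simply noting that running Algorithm \ref{alg:opt-sum-min} with Approach 1 in place of Approach 2 yields the result, and your write-up fills in precisely the right details — Claim \ref{clm:large-subset-exists} bounds the true optimum $B''$, so exact search gives $\normX{B'}\geq j^*$ and saves the extra factor of $m$, while the runtime follows from Observation \ref{obs:approach1} times the $O(mk)$ iterations.
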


\section{Core-set for \FDM under \SumNNDist}\label{sec:coreset-sum-min}
In this section we show how to get a composable core-set for \FDM with respect to the \SumNNDist notion of diversity. More specifically, the goal is to have a summarization algorithm $\mathcal{A}$ that processes each dataset $P_i$ independently of other color datasets $P_j$ and produces a summary $S_i = \mathcal{A}(P_i)$ such that
$\div_{k_1,\ldots,k_m}(S)\geq \frac{1}{\alpha}\cdot \div_{k_1,\ldots,k_m}(P)$ where again $S=\bigcup_i S_i$.
The core-set construction algorithm is simple and is shown in Algorithm \ref{alg:comp-coreset-sum-min}. The algorithm proceeds by running the GMM algorithm $k$ times, and each time for $k+1$ iterations, and without replacement. 
\begin{algorithm}[htbp]
\caption{Core-set Construction Algorithm for \SumNN}
\label{alg:comp-coreset-sum-min}
\textbf{Input} a point set $P_i$, together with parameters $k_i$ and $k$ (where $k=k_1+\cdots+k_m$)\\
\textbf{Output} a subset $S_i\subseteq P_i$
\begin{algorithmic}[1]
\STATE{$S_i\leftarrow \emptyset$}
\FOR {$j=1$ {\bfseries to} $k$}
    \STATE{$G_i=\{p_{1},\ldots,p_{k+1}\}\leftarrow $ GMM$(P_i,k+1)$}
    \STATE{$S_i\leftarrow S_i\cup G_i$}
    \STATE{$P_i\leftarrow P_i\setminus G_i$}
\ENDFOR
\STATE{{\bfseries return} $S_i$}
\end{algorithmic}
\end{algorithm}

\begin{theorem}\label{thm:coreset-sum-min}
Algorithm \ref{alg:comp-coreset-sum-min} returns a composable core-set of size $O(k^2)$ for each of the $m$ colors, with an approximation factor of $O(m\cdot \log k)$.
\end{theorem}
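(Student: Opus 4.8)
The size and composability parts are immediate: Algorithm~\ref{alg:comp-coreset-sum-min} runs $k$ rounds, each adding $k+1$ points, so $\normX{S_i}\le k(k+1)=O(k^2)$, and it clearly processes each $P_i$ without looking at the other colors, hence is color-abiding composable. The entire difficulty is the approximation factor. The plan is to show that the optimal diversity is already (approximately) realizable inside $S=\bigcup_i S_i$: concretely, I would exhibit a feasible solution contained in $S$ whose \SumNNDist\ diversity is at least $\frac{1}{O(m\log k)}\div_{k_1,\ldots,k_m}(P)$. Note that one cannot simply invoke the approximation algorithm of Section~\ref{sec:opt-sum-min} as a black box on $S$, since that only certifies a value relative to the optimum \emph{of $S$}; instead I would re-run its analysis (the exhaustive ``Approach~1'' variant behind Theorem~\ref{thm:opt-exp-sum-min}, which already loses only $O(m\log k)$) while charging against the optimum $\opt$ of the \emph{whole} data $P$.

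The first step is a clean structural reduction that makes the ball machinery of Section~\ref{sec:opt-sum-min} transfer to $S$ verbatim. Let $i^*$ be the most significant color and $j^*=\argmax_j j\cdot d^{i^*}_j$ as in Section~\ref{sec:opt-sum-min}. I would observe that GMM run on $S_{i^*}$ reproduces GMM run on $P_{i^*}$ for the first $k$ iterations: the first round of the construction stores exactly the $(k+1)$-point GMM sequence of $P_{i^*}$ inside $S_{i^*}$, and since the farthest point of $P_{i^*}$ from any current prefix lies in that stored sequence, it is also the farthest point of the subset $S_{i^*}$; inductively the centers $p_1,\dots,p_k$ and radii $r_1,\dots,r_k$ (hence $t(j)$ and the ball family $B$) are identical to those produced when the algorithm is run on $P$. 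Consequently the choice of $(i^*,j)$ and Claim~\ref{clm:large-subset-exists} carry over unchanged and hand me a sub-family $B''\subseteq B$ with $\normX{B''}\ge j^*$ such that $P\setminus B''$ meets every color demand, where each ball has radius $r_{t(j)}/2\ge d^{i^*}_{j^*}/8$.

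The crux is the remaining ``feasibility-preservation'' claim: that this same $B''$ is feasible for the core-set, i.e.\ $\normX{S_\ell\setminus B''}\ge k_\ell$ for each $\ell\ne i^*$ and $\normX{S_{i^*}\setminus B''}\ge k_{i^*}-j^*$. Here I would prove a capture lemma for the multi-round GMM: because each round selects points that are pairwise far apart, any ball whose radius is below a round's separation absorbs at most one point of that round, so running $k$ rounds (depositing $k(k+1)$ points in total) leaves at least $k_\ell$ points of color $\ell$ outside the at most $k$ balls of $B''$. I expect this to be the main obstacle, for a genuine reason: the ball radius $r_{t(j)}/2$ is dictated by color $i^*$ and may well exceed color $\ell$'s own separation scale, in which case a single round can deposit many points of $S_\ell$ inside one ball and the naive ``at most one per round'' count breaks. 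The resolution I would pursue is that such a ``fine'' round forces the surviving points of $P_\ell$ to be tightly covered by its $k+1$ centers (by Property~\ref{prop:gmm-2}), while the points actually needed outside $B''$ lie on the far side of the balls and are therefore exactly the extreme points that GMM keeps selecting round after round; combined with the constant slack between the radius $r_{t(j)}/2$ and the separation $d^{i^*}_{j^*}$ of the optimal points (and, if needed, the standing assumption $n_\ell\ge 2k_\ell$), this should preserve the outside-demand up to constants. Making this dichotomy fully rigorous is where the real work lies.

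Granting the capture lemma, the conclusion follows by replaying Section~\ref{sec:opt-sum-min}. The solution built at iteration $(i^*,j)$ places $\normX{B''}\ge j^*$ points of color $i^*$ so that no chosen point lies within $r_{t(j)}/2\ge d^{i^*}_{j^*}/8$ of them, so its \SumNNDist\ value is at least $j^*\cdot d^{i^*}_{j^*}/8$; by Claim~\ref{clm:max-j} this is $\Omega\!\big(\frac{1}{\log k}\sum_{j'} d^{i^*}_{j'}\big)$, and by the choice of $i^*$ it is $\Omega\!\big(\frac{1}{m\log k}\div(\opt)\big)$. Since this solution lives entirely in $S$, it certifies $\div_{k_1,\ldots,k_m}(S)\ge\frac{1}{O(m\log k)}\div_{k_1,\ldots,k_m}(P)$, which is the claimed bound.
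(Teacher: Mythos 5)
Your overall architecture matches the paper's: the size bound and composability are immediate, GMM on $S_{i^*}$ reproduces GMM on $P_{i^*}$ because the first round of the construction deposits the full $(k+1)$-point GMM sequence into the core-set, and the whole burden falls on showing that a ball family that is feasible for $P$ remains feasible for $S$. You have correctly located the crux and even correctly identified why the naive ``one point per round per ball'' count fails (the ball radius is set by color $i^*$ and can exceed color $\ell$'s separation scale). But your proposed resolution of that obstacle --- that the points needed outside $B''$ ``lie on the far side of the balls and are therefore exactly the extreme points that GMM keeps selecting'' --- is not an argument, and you concede as much. As written, the feasibility-preservation claim for the \emph{same} balls $B''$ of radius $r_{t(j)}/2$ does not close: if some round $T^i$ falls entirely inside $B''$, the pigeonhole only gives two points of $T^i$ at distance at most $r_{t(j)}$, so Property~\ref{prop:gmm-2} places the remaining points of $P_\ell$ within $r_{t(j)}$ of the collected set, hence within $3r_{t(j)}/2$ of a center --- which is \emph{not} inside $B''$, and the chain of implications breaks.

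The missing idea in the paper is to shrink the balls by a constant factor and pay for it in the approximation ratio. One replaces $B$ by a family $C$ of concentric balls of radius $r_{t(j)}/6$ and proves feasibility of the corresponding $C'$ for $S_\ell$: either every round $T^i$ ($i\le k_\ell$) contributes a point outside $C'$, or at the first fully captured round the pigeonhole ($k+1$ points in at most $k$ shrunken balls) yields two points at distance at most $2\cdot r_{t(j)}/6=r_{t(j)}/3$, whence Property~\ref{prop:gmm-2} puts every not-yet-collected point of $P_\ell$ within $r_{t(j)}/3$ of a point that is itself within $r_{t(j)}/6$ of a center --- i.e., inside the \emph{original} radius-$r_{t(j)}/2$ balls $B'$. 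Feasibility of $B'$ for $P_\ell$ then forces all $k_\ell$ witnesses of $P_\ell\setminus B'$ to have been collected in earlier rounds, so they lie in $S_\ell\setminus C'$. The $1/6$ versus $1/2$ is exactly the slack that makes ``close to a captured point'' imply ``inside the big ball,'' and the price is that the solution built from $C$ separates points only at scale $r_{t(j)}/6$ instead of $r_{t(j)}/2$, costing a factor $3$ that is absorbed into the $O(m\log k)$. Without this shrinkage (or an equivalent two-scale device), your dichotomy does not go through, so the proposal has a genuine gap at its central step.
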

\begin{proof}
We show the lemma by comparing the result of running Algorithm \ref{alg:opt-sum-min} 
on the union of the core-sets $S=\bigcup_i S_i$, and running it on the whole dataset $P=\bigcup_i P_i$.
In particular, in Line 
\ref{line:greedy} 
of Algorithm \ref{alg:opt-sum-min}, running the GMM algorithm on $S_i$ would return the same set of points as running it on $P_i$, as $S_i$ itself contains $G_i$.
Therefore, both runs of the algorithm end with the same set of balls $B$. 
We will now show the following claim.

\begin{claim}
Let $B$ be a set of at most $k$ disjoint balls of radius $r/2$ such that there exists a subset $B'\subseteq B$ of size at most $\normX{B'}\leq k$ of them such that $P_{\ell}\setminus B'$ contains $k_{\ell}$ points.
Further let $C$ be a set of $\normX{B}$ balls with the same centers as $B$ but with radius $r/6$ instead, and let $C'\subseteq C$ be the subset of them corresponding to $B'$. Then $S_{\ell}\setminus C'$ contains $k_{\ell}$ points.
\end{claim}
\begin{proof}
Note that $S_{\ell}$ is the result of running the GMM algorithm $k$ times on $P_{\ell}$ each for $k+1$ iterations.
Now, for $i\leq k$, let us use $T^i$ to denote the set of $k+1$ points returned by the $i$th run of GMM. Thus, $T^{i}$ is the result of running GMM on the set $P_{\ell}\setminus \bigcup_{j<i} T^j$.
Let $r^i$ be the minimum pairwise distance between the points in $T^i$. Now if for all $i\leq k_{\ell}$,  $T^i$ contains at least one point outside of the balls in $C'$ then clearly $T_{\ell}$ contains $k_{\ell}$ points outside $C'$ and the claim is proved.

Otherwise consider the first iteration $i$ such that $T^i\subset C'$. This means that since $\normX{T^i}=k+1$ and $\normX{C'}=\normX{B'}\leq k$, there exists a pair of points in $T^i$ whose distance is at most $2\cdot r/6=r/3$. But then by Property \ref{prop:gmm-2} 
of GMM, this means that all the other points in $P_{\ell}\setminus \bigcup_{j\leq i} T^i$ are within a distance of $r/3$ from one of the balls in $C'$ and therefore, they are all within $B'$. This means that $P_{\ell}\setminus B'$ can only contain points from $\bigcup_{j<i} T^j\subseteq S_{\ell}$, and thus the lemma is proved.
\end{proof}

This means that if we replace the set of balls $B$ in Algorithm \ref{alg:opt-sum-min} 
with $C$, both Approach 1 and Approach 2, will be able to find the set $C'$ (either exactly or approximately depending on which Approach we use). 
This means that corresponding to each solution $\sol$ found by Algorithm \ref{alg:opt-sum-min}, there will be a corresponding solution in the core-set of diversity at least $\div(\sol)/3$ (as the radius of the balls in $C$ is $1/3$ of the radius of the balls in $B$). Therefore, as Theorem \ref{thm:opt-exp-sum-min} 
shows that using Approach 1, one gets a solution with approximation factor of $O(m\cdot \log k)$, then this means that Algorithm \ref{alg:comp-coreset-sum-min} 
provides a composable core-set of size $O(k^2)$ for each color, with an approximation factor of $O(m\cdot \log k)$.
\end{proof}


\vspace{-0.23cm}

\section{Experiments}\label{sec:experiments}
To demonstrate the effectiveness of our algorithms, we run simulations on public and timed datasets.

\vspace{-0.19cm}
\subsection{Tasks and Datasets}\label{sec:tasks_datasets}
\textbf{\FDM as a tool to account for recency in a summary.}
Here the task is to produce a summary of a set of timed messages with a property that (i) the summary is a diverse subset of the messages,
and (ii) there are more recent messages shown in the summary. We model this task by partitioning the messages into groups / colors based on their creation times and assign a desired budget to each group.

We use a {\em Reddit} dataset~\cite{zhu2019did} of text messages that are semantically embedded using BERT~\cite{devlin-etal-2019-bert} into a metric space. We also utilize message creation time stamps as we aim to show diverse, yet timely and relevant messages to a user created across different time-window intervals within a certain period and thus assign a color to each message based on to which time-window interval (e.g., a week, thus having 4 colors in total) the message creation time belongs to.

\paragraph{\FDM as a tool for controlling the desired fairness contribution of each group.}~We further use \FDM to control the contribution of each genre (besides time) in a movie dataset. Apart from the {\em Reddit} dataset, for this task we also use the {\em MovieLens} dataset~\cite{HarperKonstan2015} where the movie titles are semantically embedded into a metric space. For this dataset, we have assigned a movie to a group represented by a color based on two criteria: (i) movie genre, and (ii) the time-window interval the movie review creation time belongs to. We include the result of this dataset in Appendix~\ref{sec:appendix_exp_results}.

More details on the datasets preparation are given in Appendix~\ref{sec:datasets_preparation} and additional result figures are provided in Appendix~\ref{sec:appendix_exp_results}. We have also made the code of the algorithms publicly available\footnote{\small{\url{https://github.com/microsoft/coresets-fair-diverse}}}.

\subsection{Summary of the Experiments and Results}
We run the following experiments using all three diversity measures.

\paragraph{Need for \FDM.}
    We first show why we need to resort to \FDM. In particular, we show that if we run \DM on the data, the results are not as fairly balanced as we want, as depicted in Figure~\ref{fig:DMoutcomeDistribution_timeColor} and Figures~\ref{fig:DMoutcomeDistribution_timeColor_1} \& \ref{fig:DMoutcomeDistribution_genreColor} (Appendix~\ref{sec:fairness_DM_algo}).
    In the case of time periods as colors dividing the data into $m=4$ colors (i.e., quarters within a month), for $k=20$ \DM algorithms give: (i) a certain color that is clearly dominantly present in the outcome (for all diversity distances), and (ii) this color is not the one from the most recent messages (i.e., from the last color), see Figure~\ref{fig:DMoutcomeDistribution_timeColor} for the {\em Reddit} dataset.

\begin{figure}[!ht]
    \centering
   \begin{subfigure}[b]{0.325\textwidth}
        \includegraphics[width=\textwidth]{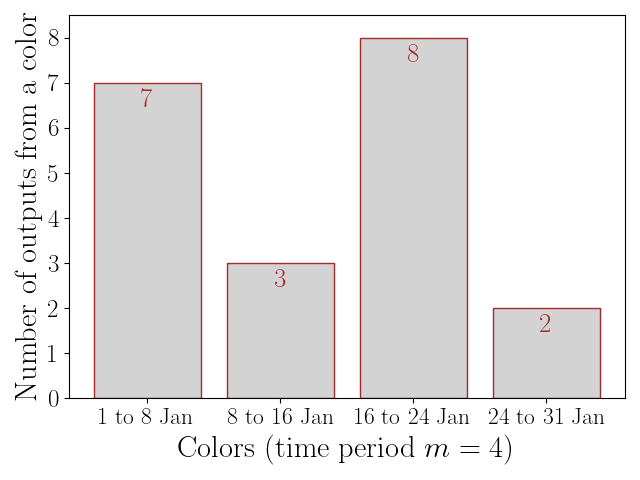}
        \caption{Reddit (\SumPairwise)}
        \label{fig:reddit_time_color_sum}
    \end{subfigure}
    \begin{subfigure}[b]{0.325\textwidth}
        \includegraphics[width=\textwidth]{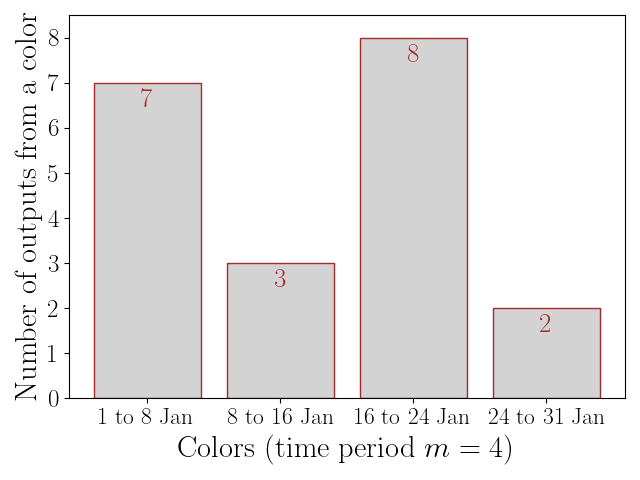}
        \caption{Reddit (\SumNN)}
        \label{fig:reddit_time_color_sumNN}
    \end{subfigure}
    \begin{subfigure}[b]{0.325\textwidth}
        \includegraphics[width=\textwidth]{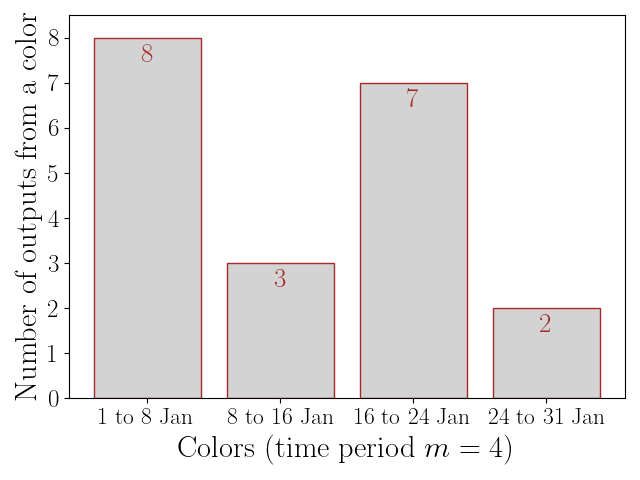}
        \caption{Reddit (\MinPairwise)}
        \label{fig:reddit_time_color_minpairwise}
    \end{subfigure} 
    \caption{\DM algorithm outcomes for all colors ($m=4$) with equidistant time period as fairness colors in the Reddit dataset ($k=20$ items).}\label{fig:DMoutcomeDistribution_timeColor} 
\end{figure}

\paragraph{Price of fairness (balancedness).} Our next set of experiments show that when using \FDM the diversity is not decreased by a lot compared to \DM. We use the {\em loss} of diversity (\% Div. loss) expressed as a relative change of diversity distances as a measure. This is shown in Table~\ref{table_loss_diversity_standard_colored_fulldata}, and Tables~\ref{table_loss_diversity_standard_colored_fulldata_movielens} \& \ref{table_loss_diversity_genres_colored_fulldata_movielens} (Appendix~\ref{sec:exp_results__DMvsFDM}). For most of these experiments we use the state-of-the-art algorithms. However for \SumNNDist we use our proposed deterministic algorithm of Section \ref{sec:opt-sum-min} as an alternative to the LP-based randomized algorithm of \cite{bhaskara2016linear}. When we run \FDM, we gain in fairness (or balancedness), but we lose in diversity over the \DM, expressed in different metrics. Experiments from Table~\ref{table_loss_diversity_standard_colored_fulldata} show that we lose around 1\% for \SumPairwiseDist; from few percent up to no more than 20\% for \SumNNDist and around 50\% for \MinPairwiseDist (due to its fragility) for various color distributions per group, while achieving the desired per group distribution, in both cases where we enforce balancedness (i.e., uniform $k_i$), and when we enforce recency (i.e., $k_i$ increasing). The results also show a similar trend if we use alternative message embeddings (Table~\ref{table_loss_diversity_standard_colored_fulldata_diff_mbeddings} in Appendix~\ref{subsubsec:different_semantic_emb_results}).

\begin{table}[ht!pb]
\caption{The {\em loss} of diversity (\% Div. loss) between \DM vs. \FDM on the full data, expressed as a relative change, of the concerned \SumPairwise, \SumNN and \MinPairwise  distances for uniform (upper part) or increasing (lower part) color values $k_i$ for the Reddit dataset.}
\label{table_loss_diversity_standard_colored_fulldata}
\npdecimalsign{.}
\nprounddigits{2}
\vskip -0.2in
\begin{center}
\begin{small}
\begin{tabular}{rrn{5}{2}n{5}{2}n{5}{2}n{5}{2}n{5}{2}}
\toprule
\multicolumn{2}{c}{\DM vs. \FDM} & \text{\SumPairwise} & \multicolumn{1}{c}{\SumNN} & \text{\MinPairwise} \\\midrule
colors $k_i$ & $\sum k_i$ &  \text{\% Div. loss}  & \text{\% Div. loss} & \text{\% Div. loss} \\ \midrule
$[2,2,2,2]$ & 8  & 1.21847\% & 9.65822\% & 51.5744\% \\
$[3,3,3,3]$ & 12  & 0.976121\% & 14.2686\% & 49.9918\% \\
$[4,4,4,4]$ & 16 & 0.502517\%  & 13.7213\% & 48.7836\%\\
$[5,5,5,5]$ & 20 & 0.473986\%  & 18.955\% & 48.051\% \\
$[6,6,6,6]$ & 24  & 0.193663\% & 9.4828\% & 47.1983\% \\ \midrule
$[2,4,6,8]$ & 20  & 0.416084\%   & 15.4043\% & 48.051\% \\
$[3,6,9,12]$ & 30  & 0.291217\%  & 13.2858\% & 46.3368\% \\
$[4,8,12,16]$ & 40 & 0.251536\%  & 1.98241\% & 45.5225\% \\
$[5,10,15,20]$ & 50 & 0.158796\%  & 9.61764\% & 44.4814\% \\
$[6,12,18,24]$ & 60 &  0.124374\% & 3.98375\% & 43.6022\%  \\
\bottomrule
\end{tabular}
\end{small}
\end{center}
\vskip -0.2in
\npnoround
\end{table}

    
\paragraph{Effectiveness of our core-sets.}
    We then show the effectiveness of the core-sets. We run our core-set construction algorithms (of Section \ref{sec:coreset-sum-pairwise} and Section \ref{sec:coreset-sum-min}) on each color independently, to get a smaller size dataset. We then run the \FDM optimization once on the union of the core-sets and once on the whole data. We also measure the {\em loss} of diversity and runtime improvement achieved by the use of core-sets. The results for all three measures
    are given in Table~\ref{table_loss_diversity_colored_fulldata_vs_coresets_timegains} and Table~\ref{table_loss_diversity_colored_fulldata_vs_coresets_timegains_movielens} (Appendix~\ref{sec:exp_results__effect_coresets}). 
    In terms of diversity loss, the results show the diversity values are on-par or with marginal difference if we apply the \FDM to the union of core-sets, compared to \FDM applied to the full data. (In the case of \SumNNDist, there are cases where one is marginally better than the other, due to higher approximation factor.) Our experiments show that using core-sets, the runtime of our algorithm improves on average by factor of few  $10 \times$ or  $100 \times$, while only losing diversity by few percent. We remark that using core-sets in this context has an additional benefit: it removes the need to recompute the summary on the whole data when new messages arrive: once we summarize a batch of old message, we no longer need to process the batch and only need to work with the core-set that is computed once. 

    


\setlength{\tabcolsep}{2pt}

\begin{table*}[ht!pb]
\caption{The {\em loss} of diversity (\% Div. loss) expressed as a relative change of diversity distances, and the running time {\em gains} ($\times$ times faster) of the \FDM when applied to the union of core-sets compared to \FDM applied to the full data 
for uniform or increasing color values $k_i$ for the Reddit dataset. We remark that the results for \MinPairwise have \% Div. loss being 0\% as for this data the two points connected with the minimum distance always end up in the union of the core-sets.}
\label{table_loss_diversity_colored_fulldata_vs_coresets_timegains}
\centering
\npdecimalsign{.}
\nprounddigits{2}
\vskip -0.15in
\begin{center}
\begin{small}
\begin{tabular}{rrn{5}{2}n{5}{2}n{5}{2}n{5}{2}n{5}{2}n{5}{2}n{5}{2}n{5}{2}}
\toprule
\multicolumn{2}{c}{\FDM full data vs. core-sets} & \multicolumn{2}{c}{\SumPairwise} &  \multicolumn{2}{c}{\SumNN} &  \multicolumn{2}{c}{\MinPairwise}                \\
    \midrule
colors $k_i$ & $\sum k_i$ & \text{\% Div. loss} &\text{Time gain ($\times$)} & \text{\% Div. loss} &\text{Time gain ($\times$)} & \text{\% Div. loss} &\text{Time gain ($\times$)} \\ \midrule
$[2,2,2,2]$ & 8 & 1.34723\% & 196.24199 & 2.22301\% & 1769.7 & 0.00\% & 208.6407 \\
$[3,3,3,3]$ & 12 & 0.668096\%  & 333.13355 & 0.293644\% & 888.55 & 0.00\% & 152.475788  \\
$[4,4,4,4]$ & 16 & 1.20711\%  & 539.694737 & -1.59294\% & 474.26 & 0.00\% & 122.29361180 \\
$[5,5,5,5]$ & 20 & 1.17289\%   & 432.682859 & -0.440892\% & 294.232 & 0.00\% & 89.077519444 \\
$[6,6,6,6]$ & 24 & 0.93519\%  & 130.874698 & -3.02602\% & 183.278 & 0.00\% & 63.6948233 \\ \midrule
$[2,4,6,8]$ & 20  & 1.50118\%  & 845.977729 & -1.79978\% & 285.675 & 0.00\% & 91.43893799 \\
$[3,6,9,12]$ & 30 & 1.05661\%  & 134.762717 & 2.2669\% & 110.359 & 0.00\% & 53.05015 \\
$[4,8,12,16]$ & 40 & 1.0192\%  & 182.06151 & -0.884802\% & 57.8847 & 0.00\% & 36.512213 \\
$[5,10,15,20]$ & 50 & 1.16205\%  & 194.358913 & 0.705369\% & 34.8981 & 0.00\% & 26.9697207 \\
$[6,12,18,24]$ & 60 & 1.26935\%  & 172.246359 & -0.485612\% & 23.7127 & 0.00\% & 20.5250989 \\ \bottomrule
\end{tabular}
\end{small}
\end{center}
\npnoround 
\end{table*}

\paragraph{Comparison to state-of-the-art core-set algorithm for \SumPairwiseDist.} Finally, we compare the core-set construction algorithm in~\cite{ceccarello2018fast} for \SumPairwiseDist to our algorithm of Section \ref{sec:coreset-sum-pairwise}.
The experiments, presented in Table~\ref{table_coreset_construction_comparison}, show that, the size of the core-set obtained by the algorithm 
 in~\cite{ceccarello2018fast} for getting a factor 2 approximation is close to the data size in practice. In addition, the final diversity loss of our algorithm is negligible (less than 1\% to few percent) even though it produces a much smaller core-set. Further, our core-set construction algorithm is also significantly faster.

\begin{table*}[t!pb]
\caption{The size of the union of core-sets produced by the algorithm of \cite{ceccarello2018fast} and the size of the union of core-sets produced by our algorithm. The {\em loss} of diversity (\% Div. loss) expressed as a relative change of diversity distances, and the running time {\em gains} ($\times$ times faster) of the \FDM when applied to the union of core-sets produced by our algorithm vs. the one from  \cite{ceccarello2018fast} for \SumPairwise distance 
for uniform or increasing color values $k_i$ for the Reddit dataset.}
\label{table_coreset_construction_comparison}
\centering
\npdecimalsign{.}
\nprounddigits{2}
\begin{center}
\begin{small}
\begin{tabular}{rrrrn{5}{2}n{5}{2}n{5}{2}n{5}{2}n{5}{2}n{5}{2}}
\toprule
\multicolumn{2}{c}{Corsets size comparison} & \multicolumn{2}{c}{$\cup_i$ core-sets size} & \multicolumn{1}{c}{core-sets compute} & \multicolumn{2}{c}{\FDM($\cup_i$ core-sets)}                 \\
    \midrule
colors $k_i$ & $\sum k_i$ & \multicolumn{1}{r}{\text{\cite{ceccarello2018fast} alg.}} & \multicolumn{1}{r}{\text{our alg.}} &\text{Time gain ($\times$)} & \text{\% Div. loss} &\text{Time gain ($\times$)} \\ \midrule
$[2,2,2,2]$ & 8 & 12791 & 40 & 17.18967877152589  & 3.89448911685867\% & 1055.6211806015  \\
$[3,3,3,3]$ & 12 & 13486  & 56 & 9.203747780183 & 0.62291978193229\% & 529.96279373833999  \\
$[4,4,4,4]$ & 16 & 13857  & 72 & 5.27580114590822 & 0.25503174848347\% & 458.0547108309  \\
$[5,5,5,5]$ & 20 & 13993   & 154 & 3.653447099785079  & 1.69087981082295\% & 246.57618589070 \\
$[6,6,6,6]$ & 24 & 14305  & 164 & 2.77649331138 & 0.26575128438043\% & 129.72830229852688 \\ 
\bottomrule
\end{tabular}
\end{small}
\end{center}
\vskip -0.2in
\npnoround 
\end{table*}

\section{Conclusion}\label{sec:conclusion}
In this work, we study the problems of core-set construction algorithms for diversity maximization under the fairness (partition) constraint. We propose the first constant factor core-set for the sum-of-pairwise distances diversity measure whose size is independent of the dataset size and the aspect ratio. We also propose the first core-set construction algorithm for the sum-of-nearest-neighbor distances diversity measure. It remains an open question whether one can get a core-set with $O(1)$ approximation factor for the sum-of-nearest-neighbor distances diversity measure. Moreover, we demonstrate the proposed algorithms are effective in practice and can produce a good summary with losing only a few percent on diversity, while achieving significant speed-up.

\bibliographystyle{abbrv} 





\appendix
 \section{Generalization of Section \ref{sec:coreset-sum-pairwise} to the case where $k_i\geq 1$}\label{sec:coreset-sum-generalization}
In Section \ref{sec:coreset-sum-pairwise} of the paper, for simplification of the proof, we assumed that for all $i$, we have $k_i>1$. Here remove this assumption, and show that a simple modification of the algorithm works even if for some $i$'s we have $k_i=1$.
The modified algorithm is shown in Algorithm \ref{alg:coreset-sum-gen}. The only modification made is that in Line \ref{line:sum-gmm-gen}, the GMM will be run for at least two iterations even if $k_i=1$.

\begin{algorithm}[!h]
\caption{Core-set Construction Algorithm for \SumPairwise}
\label{alg:coreset-sum-gen}
\vspace{0.2cm}
\textbf{Input} a point set $P_i$, together with parameters $k_i$ and $k$ (where $k=k_1+\cdots+k_m$)\\
\textbf{Output} a subset $S_i\subseteq P_i$\\
\begin{algorithmic}[1]
    \STATE{$S_i=\{p_{1},\ldots,p_{k_i}\}\leftarrow $ GMM$(P_i,\max\{k_i,2\})$}\label{line:sum-gmm-gen}
    \STATE{$T\leftarrow \emptyset$}
    \FOR{$p\in S_i$}
        \FOR{$j=1$ {\bfseries to} $k_i$}
            \STATE{$T \leftarrow T\cup$ the closest point $p_j\in P_i\setminus T$ to $p$ s.t. $\argmin_{q\in S_i}\dist(p_j,q) = p$.}\label{line:sum-additional-gen}
        \ENDFOR
    \ENDFOR
    \STATE{$S_i\leftarrow S_i\cup T$}
\STATE{{\bfseries return} $S_i$}
\end{algorithmic}
\end{algorithm}

\begin{theorem}\label{thm:coreset-sum-gen}
Algorithm \ref{alg:coreset-sum-gen} produces a core-set with size $O(k_i^2)$ and a constant factor approximation: 
$\div_{k_1,\ldots,k_m}(S)\geq \frac{1}{C}\cdot\div_{k_1,\ldots,k_m}(P)$ for a constant $c$.
\end{theorem}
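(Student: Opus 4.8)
The plan is to re-run the two-case argument from the proof of Theorem~\ref{thm:coreset-sum} almost verbatim, changing only the treatment of singleton colors. The size bound is immediate: for $k_i=1$ the set $S_i$ consists of the (at most) two GMM seeds together with at most $\max\{k_i,2\}\cdot k_i=2$ extra points, so $|S_i|\le 4=O(k_i^2)$, while for $k_i\ge 2$ the estimate $|S_i|\le k_i+k_i^2=O(k_i^2)$ is unchanged. The crucial point is that running GMM for $\max\{k_i,2\}\ge 2$ iterations equips every color with a meaningful scale $r_i$: for $k_i\ge 2$ it is the minimum pairwise distance of the $k_i$ seeds as before, and for $k_i=1$ I set $r_i=\dist(p_1,p_2)$. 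With this definition the covering property of GMM still holds (every point of $P_i$ lies within $r_i$ of the seed set), and the global upper bound $\div(\opt)\le 20\,k\sum_i k_i r_i$ used in Case~1 goes through unchanged, since its derivation only uses $\dist(x,y)<5\max\{r_x,r_y\}$ together with $\sum_{x\in\opt} r_x=\sum_i k_i r_i$.

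\textbf{Case 2 ($A<B$).} Here nothing essential changes. For $k_i\ge 2$ the injective map $\mu:\opt_i\to S_i$ with $\dist(o,\mu(o))\le 2r_i$ is built exactly as in the observation inside the proof of Theorem~\ref{thm:coreset-sum}. For a singleton color $\opt_i=\{o_i\}$, I simply let $\mu(o_i)$ be the nearer of the two seeds; by the covering property $\dist(o_i,\mu(o_i))\le r_i\le 2r_i$, and injectivity is trivial since the image is a single point. The remaining chain of triangle-inequality bounds yielding $\div(\sol)\ge B/5\ge\div(\opt)/10$ is then identical.

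\textbf{Case 1 ($A\ge B$).} This is where singletons require genuine work and is the main obstacle, since a single seed need not be far from cross-color points. I keep the GMM seeds as $\sol_i$ when $k_i\ge 2$, and for each singleton color I choose $\sol_i=\{s_i\}$ with $s_i$ drawn uniformly at random from $\{p_1,p_2\}$, independently across colors. The key estimate is that for any external point $y\in\sol\setminus\sol_i$ (conditioning first on $y$ when it too is random),
\[
\mathbb{E}\big[\dist(s_i,y)\big]=\tfrac12\big(\dist(p_1,y)+\dist(p_2,y)\big)\ge \tfrac12\dist(p_1,p_2)=\tfrac12 r_i,
\]
which exactly matches the bound $\tfrac{k_i}{2}r_i$ that the $k_i\ge 2$ colors obtain deterministically via the triangle inequality. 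Substituting these expectations into the same internal-plus-cross accounting as in Theorem~\ref{thm:coreset-sum} reproduces the identity $\mathbb{E}[\div(\sol)]\ge \tfrac{k}{4}\sum_i k_i r_i+\sum_i r_i\big(\tfrac{k_i^2}{4}-\tfrac{k_i}{2}\big)$. The per-color terms are nonnegative for $k_i\ge 2$ and equal $-\tfrac14 r_i$ for singletons, so the only new loss is $-\tfrac14\sum_{k_i=1}r_i\ge -\tfrac14\sum_i k_i r_i$. Hence $\mathbb{E}[\div(\sol)]\ge \tfrac{k-1}{4}\sum_i k_i r_i\ge \tfrac{k}{8}\sum_i k_i r_i$ for $k\ge 2$, and combining with $\div(\opt)\le 20\,k\sum_i k_i r_i$ gives $\mathbb{E}[\div(\sol)]\ge \tfrac{1}{160}\div(\opt)$.

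\textbf{Derandomization and conclusion.} Since the expectation is over the independent seed choices, some fixed choice of one seed per singleton color attains $\div(\sol)\ge\tfrac{1}{160}\div(\opt)$; because each $S_i$ contains both candidate seeds, this solution is a subset of $S=\bigcup_i S_i$. Taking whichever of the Case~1 and Case~2 solutions corresponds to the case that $\opt$ falls into yields $\div_{k_1,\ldots,k_m}(S)\ge\tfrac1C\div(\opt)$ for an absolute constant $C$. The degenerate situations $k=1$ (diversity $0$) and $|P_i|<2$ (the tiny color is retained in full) are handled trivially. I expect the cross-term expectation bound for singletons, and the bookkeeping that absorbs the residual $-\tfrac14\sum_{k_i=1}r_i$ term, to be the only genuinely new steps relative to the proof of Theorem~\ref{thm:coreset-sum}.
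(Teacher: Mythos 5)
Your proposal is correct, and the delicate part --- Case 1 with singleton colors --- is handled by a genuinely different argument than the paper's. The paper proceeds deterministically: it orders the singleton colors by increasing $r_i$, processes them sequentially, and for each picks whichever of the two seeds has the larger total distance to the points already selected, which yields $\sum_{x\in\bigcup_{j<i}\sol_j}\dist(x,\sol_i)\geq\frac{r_i}{2}(t+i-1-\normX{G})$; it then splits $\div(\sol)$ into three parts (within-$G$, cross, within-singletons) and needs a further case distinction on whether $t=\sum_{i\in G}k_i$ is at least $k/2$, using the sorting of the $r_i$'s to salvage the case $t<k/2$. You instead pick each singleton's seed uniformly at random and observe that the triangle inequality gives $\mathbb{E}[\dist(s_i,y)]\geq\frac12 r_i$ for every external $y$ (with the correct conditioning when $y$ is itself random), so the cross-term bound $\frac{k_i}{2}r_i$ from the proof of Theorem~\ref{thm:coreset-sum} survives in expectation with $k_i=1$; the resulting deficit $-\frac14\sum_{k_i=1}r_i$ is absorbed into the main term $\frac{k}{4}\sum_i k_ir_i$, and the probabilistic method derandomizes since both candidate seeds lie in $S_i$. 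Your route avoids the sorting, the three-way decomposition, and the $t$-versus-$k/2$ case split, and even lands on a slightly better constant ($160$ versus the paper's $320$ in this case); the paper's greedy selection can be read as a hands-on sequential derandomization of your random choice, but its analysis is more intricate for no gain. Your treatment of Case 2 (mapping the single optimal point of a singleton color to the nearer seed, using the covering radius $r_i$) and of the size bound matches the paper. The only nitpick is that the degenerate case $\normX{P_i}=1$, where two seeds cannot be produced and $r_i$ is undefined, deserves the one explicit sentence you give it; the paper glosses over it as well.
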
 
\begin{proof}
Again it is easy to see that the size of each $S_i$ is at most $k_i^2$, and we only need to show that the approximation factor is $O(1)$. 
We use the same notation as in the proof of Theorem \ref{thm:coreset-sum}.
Note the only modification is that now for colors $i$ with $k_i=1$, we have that $r_i$ is the distance between the two points in $S_i$ returned by GMM$(P_i,\max\{k_1,2\})$ in Line \ref{line:sum-gmm-gen} 
of the algorithm.
Again, we divide the optimal value of the diversity into two parts $A$ and $B$ defined as before.
Again we consider the two cases separately. The case of $A<B$ (i.e., Case 2 in the proof of Theorem \ref{thm:coreset-sum}), holds similar to before. So we only need to prove the case of $A\geq B$, which we show below.


In this case as before, we have that $\div(\opt)\leq 20\cdot k\sum_{i\in [m]} k_i r_i$.

Now let us divide the colors into two groups based on the value of their $k_i$, i.e., let $G=\{i\colon k_i>1\}$ and thus $[m]\setminus G$ will contain colors $i$ with $k_i=1$. WLOG, let us rename the colors so that the first $\normX{G}$ colors of $[m]$ are the colors in $G$, and moreover, colors $\normX{G}+1$ to $m$ are sorted such that $r_{j_1}\leq r_{j_2}$ whenever $\normX{G}+1\leq j_1<j_2\leq m$. Finally, we define $t=\sum_{j\leq \normX{G}} k_j$ to be the total number of points in any solution that are from groups with at least one point in them.

\paragraph{Constructing the solution.} Now let us define a solution as follows. 
\begin{itemize}
    \item For $i\leq \normX{G}$, define $\sol_i \subseteq S_i$ as before, to be the first set of $k_i$ points added to $S_i$ in Line
    \ref{line:sum-gmm-gen} 
    of the algorithm using GMM. Let $\sol^G = \bigcup_{i\in G} \sol_i$ be the union of all these solutions.
    
    \item To define $\sol_i$ for $\normX{G}+1\leq i\leq m$, note that $k_i=1$, and thus we need to decide which of the two points added in Line \ref{line:sum-gmm-gen} 
    of the algorithm to $S_i$ should be picked in $\sol_i$. Note that these two points are at distance exactly $r_i$. 
    
    We go over the colors $\normX{G}+1$ to $m$ one by one and decide which of the two points to choose for $\sol_i$. When at color $i$, let $p_i,p'_i\in S_i$ be the two points returned in Line 
    \ref{line:sum-gmm-gen} 
    of the algorithm using GMM which are at distance $r_i$ from each other. We let $\sol_i=\{p_i\}$ if $\sum_{x\in \bigcup_{j<i}\sol_j} \dist(x,p_i)\geq \sum_{x\in \bigcup_{j<i}\sol_j} \dist(x,p'_i)$, and let $\sol_i=\{p'_i\}$ otherwise. 
    
    This way of constructing the solution gives us the property that
    \begin{align*}
        \sum_{x\in \bigcup_{j<i}\sol_j} \dist(x,\sol_i)\geq \frac{r_i}{2}\cdot (t+i-1-\normX{G})
    \end{align*}
    where we abused the notation $\sol_i$ to refer to the only point in $\sol_i$.
\end{itemize}

\paragraph{Analysis.} To show that the constructed solution has large diversity, let us divide the participating distances into three parts as below.
\begin{align*}
\div(\sol) &= \sum_{x,y\in \sol^G}\dist(x,y) +  
\sum_{x\in \sol^G, y\in \sol\setminus\sol^G}\dist(x,y) + 
\sum_{x,y\in \sol\setminus\sol^G}\dist(x,y)
\end{align*}
Let us refer to the above three terms as $\div_1$, $\div_2$, and $\div_3$. Now by the construction of $\sol_i$ for $i\leq \normX{G}$, similar to Case 1 in the proof of Theorem \ref{thm:coreset-sum}, one can show that

\begin{align*}
    \div_1+\div_2 \geq \frac{k}{4}\sum_{i\leq\normX{G}} k_ir_i
\end{align*}

\begin{claim}
$\div_2+\div_3\geq \frac{k}{8}\sum_{i=\normX{G}+1}^{m} r_i$.
\end{claim}
\begin{proof}
Note that we have 
\begin{align*}
    \div_2+\div_3 &= 
\sum_{x\in\sol\setminus\sol^G, y\in \sol }\dist(x,y)
\geq \sum_{i=\normX{G}+1}^m \frac{r_i}{2}\cdot (t+i-1-\normX{G}) 
\end{align*}
Now we consider two cases separately, if $t\geq k/2$ then clearly $(t+i-1-\normX{G})\geq k/2$ for $i\geq \normX{G}+1$, and we have
\begin{align*}
    \div_2+\div_3 &\geq
 \sum_{i=\normX{G}+1}^m \frac{r_i}{2}\cdot (t+i-1-\normX{G}) 
\geq \sum_{i=\normX{G}+1}^m \frac{r_i}{2}\cdot \frac{k}{2} = \frac{k}{4}\cdot\sum_{i=\normX{G}+1}^m r_i
\end{align*}
Otherwise, if $t<k/2$ then since we assumed that the $r_i$ for $\normX{G}+1\leq i\leq m$ are sorted, and noting that the last term in the summation is $\frac{r_m}{2}\cdot (k-1)$, we have

\begin{align*}
    \div_2+\div_3 &= 
 \sum_{i=\normX{G}+1}^m \frac{r_i}{2}\cdot (t+i-1-\normX{G}) 
\geq \sum_{i=m-k/2}^m \frac{r_i}{2}\cdot \frac{k}{2} = \frac{k}{4}\cdot\sum_{i=m-k/2}^m r_i\geq \frac{k}{8}\sum_{i=\normX{G}+1}^m r_i 
\end{align*}
\end{proof}
Thus, noting that $k_i=1$ for $i\geq \normX{G}+1$, and using the upper bound we stated on $\div(\opt)$, we get that 

\begin{align*}
    \div(\sol)=\div_1+\div_2+\div_3 \geq \frac{1}{2}\left( \frac{k}{4}\sum_{i\leq \normX{G}} k_ir_i + \frac{k}{8}\sum_{i=\normX{G}+1}^m k_ir_i \right) \geq \frac{1}{16}\sum_i k_i r_i \geq \frac{1}{320}\div(\opt)
\end{align*}
\end{proof}
\section{Arbitrary Partitioning and Fully Composable Core-sets}
\label{sec:fully-coreset}
In this section, we show what our results imply when the partitioning of the point sets into multiple parts is not necessarily based on the colors. This will provide a \emph{composable core-set} as defined in \cite{indyk2014composable} as opposed to the \emph{color-abiding composable core-set} as defined in \ref{def:comp-coreset}.

Formally, the goal is to present a summarization algorithm $\mathcal{A}$ that now receives a colored point set $P=\bigcup_{i\leq m} P_i$ and the goal is to summarize it into $T\subseteq P$ with the following composablility property. For any collection of colored point sets $P^{(1)},\cdots,P^{(N)}$, we have that 

$$\div_{k_1,\cdots,k_m}(\bigcup_{j\leq N} \mathcal{A}(P^{(j)}))\geq \frac{1}{\alpha} \div_{k_1,\cdots,k_m}(\bigcup_{j\leq N} P^{(j)})$$

\begin{corollary}
    We have an algorithm that given a colored point set $P=\bigcup_{i\leq m}P_i$ computes a (not necessarily color-abiding) composable core-set of size $k^2$ for the Fair Diversity Maximization problem under the \SumPairwiseDist notion of diversity.
\end{corollary}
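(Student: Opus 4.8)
The plan is to build $\mathcal{A}$ by reusing the single-color construction of Section~\ref{sec:coreset-sum-pairwise} inside every part. Given a colored part $P^{(j)}=\bigcup_i P^{(j)}_i$, I would run the generalized core-set algorithm of Algorithm~\ref{alg:coreset-sum-gen} on each color class $P^{(j)}_i$ with budget $k_i$, obtaining $S^{(j)}_i\subseteq P^{(j)}_i$ of size at most $k_i^2$, and set $\mathcal{A}(P^{(j)})=\bigcup_i S^{(j)}_i$. Since $\sum_i k_i^2\le(\sum_i k_i)^2=k^2$, each summary has size at most $k^2$, and $\mathcal{A}$ clearly processes each part independently, as required by the composability definition. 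It then remains to bound the approximation factor of $T=\bigcup_{j\le N}\bigcup_i S^{(j)}_i$.

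For the analysis I would mimic the proof of Theorem~\ref{thm:coreset-sum}, but with a \emph{per-part-per-color} radius. Let $\opt$ be the optimal solution on $\bigcup_j P^{(j)}$, let $\opt^{(j)}_i=\opt\cap P^{(j)}_i$, and let $n_{ij}=\normX{\opt^{(j)}_i}$, so that $\sum_{i,j} n_{ij}=k$. For a point $x$ lying in part $j$ and color $i$, define $r_x=r^{(j)}_i$ to be the minimum pairwise distance of the $\max\{k_i,2\}$ GMM seeds produced while summarizing $P^{(j)}_i$. Exactly as before, I would split $\div(\opt)=A+B$, where $A$ collects the pairs with $\dist(x,y)<5\max\{r_x,r_y\}$ and $B$ the pairs with $\dist(x,y)\ge 5\max\{r_x,r_y\}$, and treat the two cases separately.

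The case $A<B$ should go through with essentially no change. Applying the Observation from the proof of Theorem~\ref{thm:coreset-sum} \emph{inside each part} yields, for every $j$, an injective map $\mu_j:\opt^{(j)}_i\to S^{(j)}_i$ with $\dist(o,\mu_j(o))\le 2r_o$ (the counting argument $\normX{N(s')}\ge\normX{D(s')}$ still holds, since at most $n_{ij}\le k_i$ optimal points land near any seed). As the parts are disjoint these maps combine into one injection $\mu$, and picking $\sol_i=\{\mu(o):o\in\opt_i\}$ gives a color-feasible solution with
\begin{align*}
\div(\sol)\ge\!\!\sum_{\dist(x,y)\ge 5\max\{r_x,r_y\}}\!\!\big(\dist(x,y)-2r_x-2r_y\big)\ge B/5\ge\div(\opt)/10 ,
\end{align*}
which is the same chain of inequalities as in Case~2 of Theorem~\ref{thm:coreset-sum}.

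The hard part is the case $A\ge B$, where one must build a feasible solution out of the GMM seeds. Here $\div(\opt)\le 2A = O(k)\sum_{i,j} n_{ij}\,r^{(j)}_i$, so I need $\sol$ with $\div(\sol)=\Omega\!\big(k\sum_{i,j} n_{ij} r^{(j)}_i\big)$. Treating each pair $(i,j)$ as a \emph{group}, I would select $n_{ij}$ seeds of color $i$ from part $j$; these are pairwise at distance $\ge r^{(j)}_i$, so for a group with $n_{ij}\ge 2$ the intra-group pairs together with the triangle-inequality bound $\sum_{x\in\text{group}}\dist(y,x)\ge\tfrac{n_{ij}}{2}r^{(j)}_i$ charge a $\Theta(k)$ factor onto that group's radius mass, exactly as in Case~1 of Theorem~\ref{thm:coreset-sum}. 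The genuine obstacle is the \emph{singleton} groups with $n_{ij}=1$: a lone point has no intra-group spread, so its radius mass $r^{(j)}_i$ is not captured this way. This is precisely the difficulty resolved for $k_i=1$ in Appendix~\ref{sec:coreset-sum-generalization}, and I would reuse that device: because Algorithm~\ref{alg:coreset-sum-gen} always retains at least two seeds per part-color, for each singleton group I can pick its representative to be the farther of two available seeds from the previously chosen solution points, guaranteeing it contributes $\Omega(r^{(j)}_i)$ times the number of earlier points. Ordering the singleton groups by increasing $r^{(j)}_i$ and summing, as in the proof of Theorem~\ref{thm:coreset-sum-gen}, recovers a $\tfrac{k}{8}\sum_{\text{singletons}} r^{(j)}_i$ contribution. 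Combining the two group types yields $\div(\sol)=\Omega\!\big(k\sum_{i,j} n_{ij} r^{(j)}_i\big)\ge\Omega(1)\cdot\div(\opt)$, which closes the constant-factor bound. The main delicacy to get right is this merger of the multi-part radius accounting with the singleton-handling trick, because the grouping and the ordering must be carried out simultaneously across all colors and parts.
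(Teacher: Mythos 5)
Your proposal is correct and takes essentially the same route as the paper: the paper likewise runs the single-color construction on each $P^{(j)}_i$ with budget $k_i$ and reduces the analysis to the color-abiding core-set theorem by treating the $N\cdot m$ part--color pairs as groups with induced budgets $k^{(j)}_i\le k_i$ (set to the number of optimal points drawn from $P^{(j)}_i$). Your unrolled version is in fact slightly more careful than the paper's two-line reduction, since you make explicit that the induced budgets can equal one, so the singleton-handling machinery of Theorem~\ref{thm:coreset-sum-gen} in Appendix~\ref{sec:coreset-sum-generalization} (rather than Theorem~\ref{thm:coreset-sum} alone) is what is really being invoked.
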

\begin{proof}
    The algorithm simply runs Algorithm \ref{alg:coreset-sum} for each of $P_i$ and returns their union. 
    The size of the core-set is clearly $\sum_{i\leq m} k_i^2 \leq k^2$.
    Moreover, by Theorem \ref{thm:coreset-sum} it is easy to verify that this is in fact a composable core-set, because for any collection of colored point sets $P^{(1)},\cdots,P^{(N)}$, one can treat them as having $N\cdot m$ different groups with $k^{(j)}_i\leq k_i$ (more precisely, one can set $k^{(j)}_i$ to be the number of points the optimal solution on $\bigcup_{j\leq N}P^{(j)}$ picks from $P^{(j)}$). Now note that although one does not know the value of $k^{(j)}_i$ at the time of running Algorithm \ref{alg:coreset-sum} on  $P^{(j)}_i$, however since $k^{(j)}_i\leq k_i$, it is easy to verify that running Algorithm \ref{alg:coreset-sum} on $P^{(j)}_i$ with parameter $k_i$ as opposed to $k^{(j)}_i$ still works.
\end{proof}

Similarly, one can get a composable core-set for the \SumNNDist notion of diversity.

\begin{corollary}
    We have an algorithm that given a colored point set $P=\bigcup_{i\leq m}P_i$ computes a (not necessarily color-abiding) composable core-set of size $O(m\cdot k^2)$ for the Fair Diversity Maximization problem under the \SumNNDist notion of diversity.
\end{corollary}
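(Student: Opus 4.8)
The plan is to mirror the proof of the preceding corollary for \SumPairwiseDist, composing the color-abiding guarantee of Theorem~\ref{thm:coreset-sum-min} over a refined partition. The algorithm $\mathcal{A}$, given a colored point set $P^{(j)}=\bigcup_{i\le m}P^{(j)}_i$, runs Algorithm~\ref{alg:comp-coreset-sum-min} on each color class $P^{(j)}_i$ (always with the global parameter $k=\sum_{i}k_i$) and returns the union over the $m$ colors. Each class contributes $O(k^2)$ points, so $\normX{\mathcal{A}(P^{(j)})}=\sum_{i\le m}O(k^2)=O(m\cdot k^2)$, giving the claimed size. One convenient simplification relative to the \SumPairwiseDist case is that Algorithm~\ref{alg:comp-coreset-sum-min} never reads the per-color budget $k_i$---it depends only on $k$---so, unlike there, we do not even need to argue that running it with a substitute budget is harmless: the same invocation is correct for every budget profile summing to $k$.

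For correctness I would reduce the fully composable statement to the color-abiding one by regarding the collection $P^{(1)},\dots,P^{(N)}$ as defining $N\cdot m$ \emph{super-colors}, one per pair $(i,j)$ with class $P^{(j)}_i$. Let $\opt$ be the optimal $m$-color solution on $P=\bigcup_{j}P^{(j)}$ and put $k^{(j)}_i=\normX{\opt\cap P^{(j)}_i}$, so that $\sum_{j}k^{(j)}_i=k_i$. Since $\opt$ takes exactly $k^{(j)}_i$ points from each super-color, it is feasible for the super-color instance, whence $\div_{\{k^{(j)}_i\}}(P)\ge\div(\opt)=\div_{k_1,\dots,k_m}(P)$. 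Conversely, any solution feasible for the super-colors takes $\sum_{j}k^{(j)}_i=k_i$ points of each original color $i$ and is therefore feasible for the original instance, so $\div_{k_1,\dots,k_m}(\bigcup_j\mathcal{A}(P^{(j)}))\ge\div_{\{k^{(j)}_i\}}(\bigcup_j\mathcal{A}(P^{(j)}))$. Writing $S^{(j)}_i$ for the output of Algorithm~\ref{alg:comp-coreset-sum-min} on $P^{(j)}_i$, the set $\bigcup_j\mathcal{A}(P^{(j)})=\bigcup_{i,j}S^{(j)}_i$ is precisely the union of per-super-color core-sets produced by the color-abiding algorithm on the super-color instance; hence Theorem~\ref{thm:coreset-sum-min} lower-bounds the middle quantity by $\frac{1}{\alpha}\div_{\{k^{(j)}_i\}}(P)$, and chaining the three inequalities closes the argument.

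The step I expect to be the main obstacle is keeping the approximation factor $\alpha$ independent of the number of parts $N$. Applying Theorem~\ref{thm:coreset-sum-min} directly to $N\cdot m$ super-colors yields $\alpha=O(Nm\log k)$, which is unacceptable in a distributed setting where $N$ counts the machines. To remove the dependence on $N$ I would revisit the ``most significant color'' step behind Theorem~\ref{thm:opt-exp-sum-min}: the factor contributed there is really the number of colors carrying \emph{nonzero} budget, since a class with no points in $\opt$ contributes nothing to $\div(\opt)$ and cannot be the most significant one. The number of super-colors $(i,j)$ with $k^{(j)}_i>0$ is at most $\sum_{i,j}k^{(j)}_i=k$, so the effective factor becomes $O(\min\{Nm,k\}\cdot\log k)=O(k\log k)$, independent of $N$. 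The technical heart is thus twofold: checking that this substitution of ``$m$'' by the nonzero-budget count goes through verbatim in the analysis of Theorem~\ref{thm:opt-exp-sum-min}, and confirming that the ball-preservation claim inside the proof of Theorem~\ref{thm:coreset-sum-min} is untouched by the refinement---which it is, being a statement about a single class $P^{(j)}_i$ and its core-set $S^{(j)}_i$, whose first $k+1$ GMM points are retained exactly as in the original proof.
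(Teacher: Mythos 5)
Your proposal is correct and follows the same route as the paper: run Algorithm~\ref{alg:comp-coreset-sum-min} on each color class, take the union, and reduce the fully composable guarantee to the color-abiding Theorem~\ref{thm:coreset-sum-min} by viewing the $N$ parts as $N\cdot m$ super-colors with budgets $k^{(j)}_i=\normX{\opt\cap P^{(j)}_i}$. Two points where you diverge are worth recording. First, you observe that Algorithm~\ref{alg:comp-coreset-sum-min} depends only on the global parameter $k$ and never reads $k_i$, so the invocation is literally the one the super-color instance requires; the paper instead argues via a monotonicity/superset claim (that running with $k_i$ yields a superset of the run with $k^{(j)}_i$), which your observation renders unnecessary. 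Second, and more substantively, you address the dependence of the approximation factor on $N$, which the paper's proof is silent about: a verbatim application of Theorem~\ref{thm:coreset-sum-min} to $N\cdot m$ super-colors gives $\alpha=O(Nm\log k)$, and your fix---noting that the ``most significant color'' step in Theorem~\ref{thm:opt-exp-sum-min} only pays for colors with nonzero budget, of which there are at most $k$---yields $\alpha=O(\min\{Nm,k\}\cdot\log k)$, independent of the number of machines. This is a genuine strengthening (the corollary as stated does not quantify $\alpha$, so the paper's terse argument is not wrong, but yours is the version one would actually want in the distributed setting), and your check that the ball-preservation claim inside Theorem~\ref{thm:coreset-sum-min} is a per-class statement unaffected by the refinement is the right thing to verify.
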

\begin{proof}
    The algorithm simply runs Algorithm \ref{alg:comp-coreset-sum-min} for each of $P_i$ and returns their union. 
    The size of the core-set is clearly $\sum_{i\leq m} k^2 \leq mk^2 = O(k^3)$.
    Moreover, by Theorem \ref{thm:coreset-sum-min} it is again easy to verify that this is in fact a composable core-set, because for any collection of colored point sets $P^{(1)},\cdots,P^{(N)}$, again one can treat them as having $N\cdot m$ different groups with $k^{(j)}_i\leq k_i$. 
    Again, although one does not know the values of $k^{(j)}_i$ at the time of running Algorithm \ref{alg:comp-coreset-sum-min}, however since $k^{(j)}_i\leq k_i$, it is easy to verify that  the point set picked by running Algorithm \ref{alg:comp-coreset-sum-min} on $P^{(j)}_i$ with parameter $k_i$ is a super set of the core-set that would have been picked if the algorithm was run with parameter $k^{(j)}_i$ as opposed to $k_i$.
\end{proof}

\begin{remark}[Applications] Now that we showed how to get a (not necessarily color-abiding) composable core-set, the result of \cite{indyk2014composable} readily implies that we have a streaming algorithm and an MPC algorithm for the Fair Diversity Maximization under the \SumPairwiseDist and \SumNNDist notions of diversity.
\end{remark}
\section{Data Preparation and Additional Experiments}\label{sec:experiments_appendix}
In order to demonstrate the effectiveness of the proposed approximation algorithms, we run simulations on public and timed datasets. In this section, we give details on the used datasets, data preparation, pre-processing, and additional experiment results.



\subsection{Datasets, Data Preparation, and Experiments Setup}\label{sec:datasets_preparation}

\paragraph{Datasets and data preparation.} We use {\em Reddit} public dataset~\cite{zhu2019did} of text messages that are semantically embedded into a metric space (e.g., 748-dimensional metric space with BERT embeddings~\cite{devlin-etal-2019-bert}). We use this collection\footnote{\label{redditData}\url{https://github.com/henghuiz/MaskedHierarchicalTransformer}} of Reddit messages as it provides the creation time stamp in the schema, which we need to mark the relative recency of the messages, presented by its color within a given month\footnote{The messages in this dataset are given with a creation time (time stamp) and message IDs, but with no content / message text and the message content is extracted by using the Reddit API: \url{https://github.com/reddit-archive/reddit/wiki/OAuth2}. We request the messages for several months (using a similar script as the one used by the dataset creators) as we want to provide fresh messages from the recent month (our search pool) in our output.}. We want to remark other commonly used Reddit datasets~\cite{volske-etal-2017-tl,kim-etal-2019-abstractive}\footnote{see also: \url{https://www.tensorflow.org/datasets/catalog/reddit} and\\ \url{https://huggingface.co/datasets/reddit_tifu}} are not considered in this work as they do not have the creation time in their schema. Since we have the messages metadata included in the creation timestamp $t_i$ of a message $i$, we can assign a color $\text{color}_i$ of message $i$ as 
\begin{align*}
    \text{color}_i & = \lfloor m \cdot \frac{t_i - t_{\min}}{t_{\max} - t_{\min}} \rfloor
\end{align*}
    where $t_{\min}$ and $t_{\max}$ are the minimum and maximum creation timestamp within the considered full input data and $m$ is the desired total number of colors. In this way, we have $m$ equidistant time intervals such that messages (relatively) close to one another by creation time belong to the same color and the most recent messages are in the $m$-th color. In this way, we can also run Fair Diversity Maximization (\FDM) for a desired distribution of number of messages within a color, i.e.,  $k_i$, for $\forall i =1,2,\ldots,m$, either by {\em fairness} (all $k_i$'s equal) or by recency ($k_i$ increases with $i$). In order to transform the message into a multi-dimensional space, we use BERT-model semantic embeddings~\cite{devlin-etal-2019-bert} in our experiments (presented in the paper). Sentences and paragraphs embedding is an active area of NLP research~\cite{reimers-2019-sentence-bert,thakur-2020-AugSBERT}. Although, the quality of sentences and paragraphs semantic embedding is out-of-scope for this work and we assume the messages are already given in the metric space, we have demonstrated (in Section~\ref{subsubsec:different_semantic_emb_results}) the trend of the results is preserved by using other state-of-the-art (such as MPNETv2~\cite{Song2020_MPNET}, T5~\cite{ni-etal-2022-sentence}, RoBERTa~\cite{liu2019roberta} or MiniLM~\cite{Wang_minilm_2020}) and traditional (like GloVe~\cite{pennington2014glove}) pre-trained models for semantic embeddings.
In the paper, the input "pool" of messages for the \DM and \FDM algorithms in the experiments is set to one month as we want to present to a user messages that are selected from a timely and fresh period and the \DM and \FDM approximation algorithms make a decision for the most relevant, but also diverse and fairly distributed (per color) messages. We present results from January 2018 Reddit dataset that consists of 21,474 samples, divided into four colors\footnote{Based on the quarter (roughly a week) to which a message belongs within this month.} and our experiments show there is a consistency in the results using input data from other periods.

We also utilize the {\em MovieLens} movie reviews dataset~\cite{HarperKonstan2015} from 2001, where the movie titles are semantically embedded into a metric space. In addition to the color determined by the creation time of the review (determined in the same way as explained for the {\em Reddit} dataset), for {\em MovieLens} dataset, we also assign a color based on the movie genre. There are 18 movie genres: \{'Western', 'Documentary', 'Mystery', 'Film-Noir', 'Thriller', 'Crime', 'Adventure', 'Musical', 'Sci-Fi', 'War', 'Action', 'Fantasy', 'Comedy', "Children's", 'Animation', 'Horror', 'Romance', 'Drama'\}.

\paragraph{Description of the experiments.}
The main experiments we run in this work are the followings.
\begin{itemize}
    \item First we show that if we run \DM on the data, the results are not balanced as we want, see Section~\ref{sec:experiments} (main part, Figure~\ref{fig:DMoutcomeDistribution_timeColor}) and Section~\ref{sec:fairness_DM_algo} (Figures~\ref{fig:DMoutcomeDistribution_timeColor_1} and \ref{fig:DMoutcomeDistribution_genreColor}), and thus we need to resort to \FDM. Further, we show that using \FDM the diversity does not decrease by much. This is shown in Section~\ref{sec:experiments} (main part,  Table~\ref{table_loss_diversity_standard_colored_fulldata}) and Section~\ref{sec:exp_results__DMvsFDM} (Tables~\ref{table_loss_diversity_standard_colored_fulldata_movielens} and \ref{table_loss_diversity_genres_colored_fulldata_movielens}).
    \item Second, we show the effectiveness of the core-sets. We run our core-set construction algorithms (developed in Section~\ref{sec:coreset-sum-pairwise} and Section~\ref{sec:coreset-sum-min}) on each color independently, to get a smaller size dataset. We then run the \FDM optimization once on the union of the core-sets and once on the whole data. We measure the diversity loss and runtime improvement achieved by the use of core-sets. The results are given in Section~\ref{sec:experiments} (main part, Table~\ref{table_loss_diversity_colored_fulldata_vs_coresets_timegains}) and Section~\ref{sec:exp_results__effect_coresets} (Table~\ref{table_loss_diversity_colored_fulldata_vs_coresets_timegains_movielens}). 
    \item The difference of the results (in the \SumPairwise notion of diversity) if we use alternative message embeddings  are given in Section~\ref{subsubsec:different_semantic_emb_results} (Table~\ref{table_loss_diversity_standard_colored_fulldata_diff_mbeddings}).
     \item Last, the practical comparison of our algorithm with the state-of-the-art algorithm of Ceccarello \emph{et al.}~\cite{ceccarello2018fast} for core-set construction respect with to \SumPairwise diversity is given in Section~\ref{sec:experiments} (main part, Table~\ref{table_coreset_construction_comparison}) with an overview in Section~\ref{subsubsec:coreset_construction_comparison}. 
\end{itemize}

\subsection{Additional experiment results}\label{sec:appendix_exp_results}

\subsubsection{Fairness of the \DM algorithms outcome and justification of \FDM need}\label{sec:fairness_DM_algo}
While the \DM algorithms produce diverse results, they often do not have the balance in the fairness we prefer to have. When we make the color division based on time periods ($m=4$, quarters within a month), \DM algorithms produce imbalanced outcome ($k=20$) for colors presence, see Figure~\ref{fig:DMoutcomeDistribution_timeColor} (main part) and Figure~\ref{fig:DMoutcomeDistribution_timeColor_1}. The conclusions are two-fold: (i) the most present messages are not the most recent ones and (ii) the results are also not equally balanced per quarter. In the case of {\em MovieLens} where the color is the movie genre ($m=18$ genres), the most present movies genres in the most optimal $k=50$ items are {\em Drama} and {\em Comedy} across all diversity distances and some genres are not present as shown in Figure~\ref{fig:DMoutcomeDistribution_genreColor}. This highlights the need for the colored version of the algorithms (\FDM) to achieve the desired fairness.

\begin{figure*}[h!tb]
    \centering
     \begin{subfigure}[b]{0.325\textwidth}
        \includegraphics[width=\textwidth]{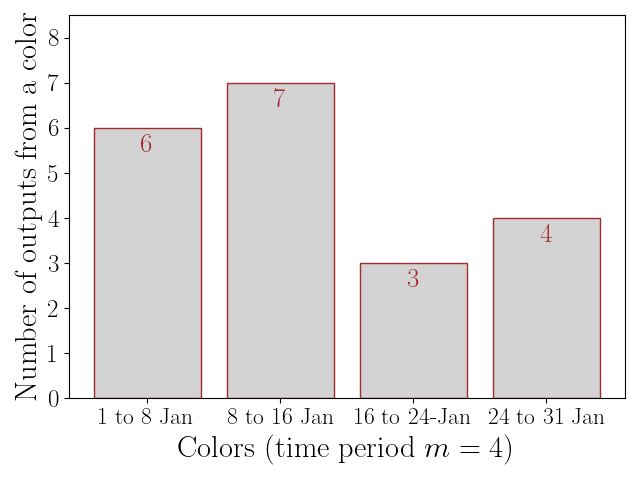}
        \caption{MovieLens (\SumPairwise)}
        \label{fig:movielens_time_color_sum}
    \end{subfigure}
    \begin{subfigure}[b]{0.325\textwidth}
        \includegraphics[width=\textwidth]{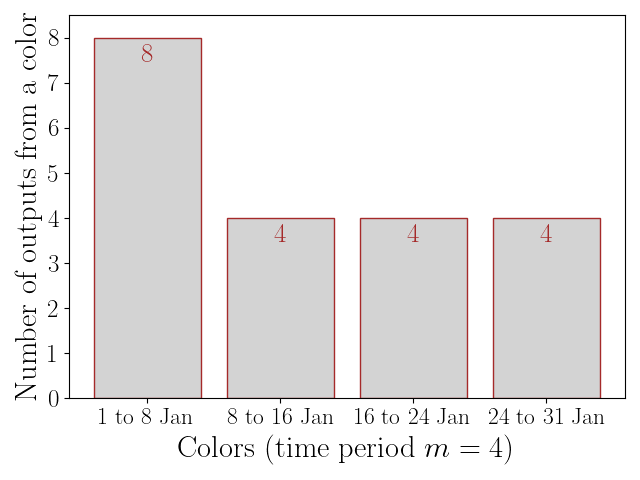}
        \caption{MovieLens (\SumNN)}
        \label{fig:movielens_time_color_sumNN}
    \end{subfigure}
    \begin{subfigure}[b]{0.325\textwidth}
        \includegraphics[width=\textwidth]{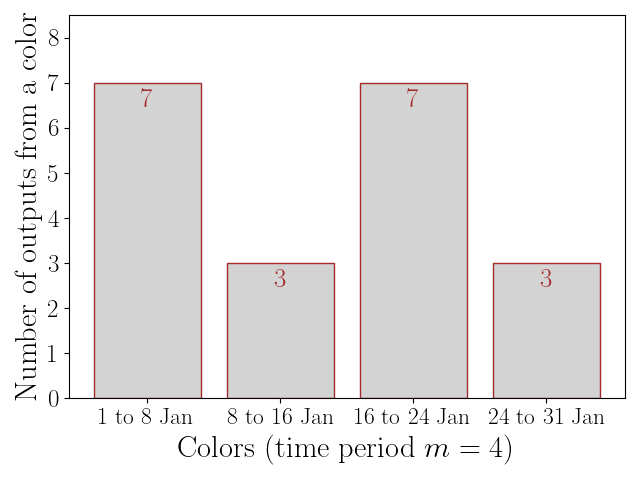}
        \caption{MovieLens (\MinPairwise)}
        \label{fig:movielens_time_color_minpairwise}
    \end{subfigure}
    \caption{\DM algorithm outcomes for all colors ($m=4$) with equidistant time period as fairness colors in the MovieLens dataset ($k=20$ items).}\label{fig:DMoutcomeDistribution_timeColor_1} 
\end{figure*}

\begin{figure*}[!ht]
    \centering
     \begin{subfigure}[b]{0.325\textwidth}
        \includegraphics[width=\textwidth]{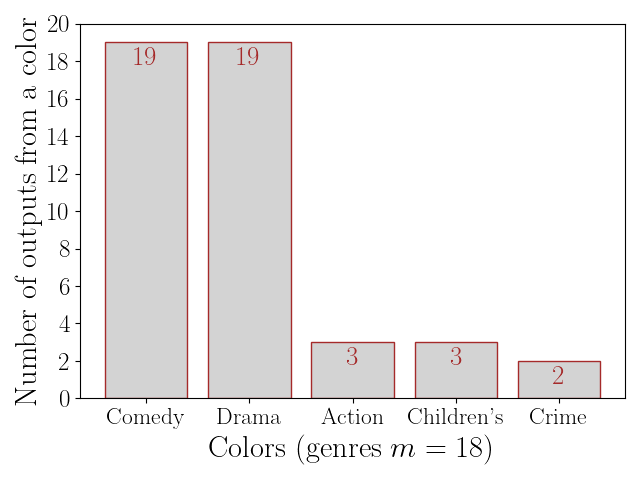}
        \caption{MovieLens (\SumPairwise)}
        \label{fig:movielens_genre_color_sum}
    \end{subfigure}
    \begin{subfigure}[b]{0.325\textwidth}
        \includegraphics[width=\textwidth]{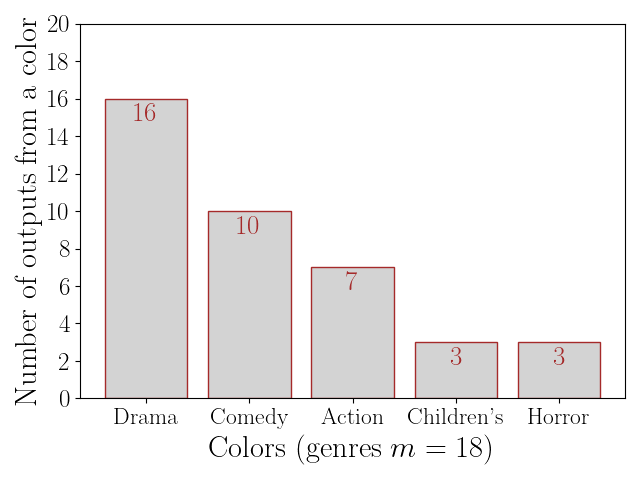}
        \caption{MovieLens (\SumNN)}
        \label{fig:movielens_genre_color_sumNN}
    \end{subfigure}
    \begin{subfigure}[b]{0.325\textwidth}
        \includegraphics[width=\textwidth]{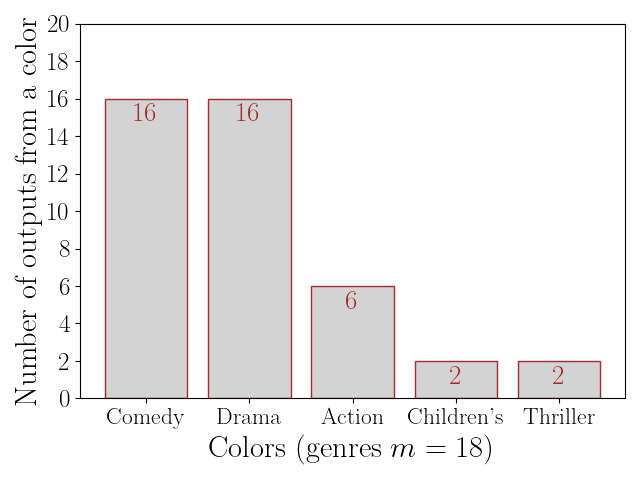}
        \caption{MovieLens (\MinPairwise)}
        \label{fig:movielens_genre_color_minpairwise}
    \end{subfigure}
    \caption{\DM algorithm outcomes for top 5 colors with genres as fairness colors ($m=18$) in the MovieLens dataset ($k=50$ items).}\label{fig:DMoutcomeDistribution_genreColor}
\end{figure*}

\subsubsection{\DM vs. \FDM results}\label{sec:exp_results__DMvsFDM}

\paragraph{Reddit dataset.} When we run \FDM, we gain in fairness (or balancedness), but we lose in diversity over the \DM expressed in different metrics. How much the diversity becomes worse when we apply the \FDM, for the Reddit dataset, is shown in Table~\ref{table_loss_diversity_standard_colored_fulldata}. Experiments show that we lose around 1\% for \SumPairwiseDist; from few percent up to no more than 20\% for \SumNNDist and around 50\% for \MinPairwiseDist (due to its fragility) for various distributions per color, while achieving the desired per colors distribution, both in balancedness (uniform $k_i=const$) and recency ($k_i$ increasing linearly; $k_m$ most recent items).


\paragraph{MovieLens dataset.} The results for the {\em MovieLens} dataset follow the trend of the results of Reddit dataset, when we compare the diversity loss (and fairness gains). This is given in Table~\ref{table_loss_diversity_standard_colored_fulldata_movielens}, comparable to Table~\ref{table_loss_diversity_standard_colored_fulldata} in the main part of the paper. In the cases with time as a color, experiments show that on diversity, we lose less than 1\% for \SumPairwiseDist; few percents for \SumNNDist and around 50-70\% for \MinPairwiseDist, while achieving the desired per color distribution. For this dataset, we also run experiments where the number of colors is larger (genres; $m=18$) and we also see a similar trend of diversity loss of around 2-3\% for for \SumPairwiseDist; few percents for \SumNNDist and around 50-55\% for \MinPairwiseDist as shown in Table~\ref{table_loss_diversity_genres_colored_fulldata_movielens}.

\begin{table}[H]
\caption{The {\em loss} of diversity (\% Div. loss) between \DM vs. \FDM on the full data, expressed as a relative change, of the concerned \SumPairwise, \SumNN and \MinPairwise distances for uniform (upper part) or increasing (lower part) color values $k_i$ for {\em MovieLens} dataset with time period within January 2001 as colors ($m=4$).}
\label{table_loss_diversity_standard_colored_fulldata_movielens}
\centering
\npdecimalsign{.}
\nprounddigits{2}
\begin{tabular}{rrn{5}{2}n{5}{2}n{5}{2}n{5}{2}n{5}{2}n{5}{2}n{5}{2}}
\toprule
\multicolumn{2}{c}{\DM vs. \FDM} & \SumPairwise &  \SumNN & \MinPairwise \\ \midrule
colors $k_i$ & $\sum k_i$ &  \text{\% Div. loss}  & \text{\% Div. loss} & \text{\% Div. loss} \\ \midrule
$[2,4,6,8]$ & 20  & 0.137652\%   & 1.45602\% & 58.9276\% \\
$[3,6,9,12]$ & 30  & 0.0164438\%  & 3.3286\% & 68.9362\% \\
$[4,8,12,16]$ & 40 & 0.0327991\%  & 1.62005\% & 68.3163\% \\
$[5,10,15,20]$ & 50 & 0.0188156\%  & 1.70258\% & 67.6411\% \\
$[6,12,18,24]$ & 60 &  0.0203164\% & 1.14884\% & 67.3562\%  \\
\bottomrule
\end{tabular}
\npnoround 
\end{table}

\begin{table}[H]
\caption{The {\em loss} of diversity (\% Div. loss) between \DM vs. \FDM on the full data, expressed as a relative change, of the concerned \SumPairwise, \SumNN and \MinPairwise  distances for uniform color values $k_i$ for {\em MovieLens} dataset with genres as colors ($m=18$).}
\label{table_loss_diversity_genres_colored_fulldata_movielens}
\centering
\npdecimalsign{.}
\nprounddigits{2}
\begin{tabular}{rrn{5}{2}n{5}{2}n{5}{2}n{5}{2}n{5}{2}n{5}{2}n{5}{2}}
\toprule
\multicolumn{2}{c}{\DM vs. \FDM} & \SumPairwise &  \SumNN & \MinPairwise \\ \midrule
colors $k_i$ & $\sum k_i$ &  \text{\% Div. loss}  & \text{\% Div. loss} & \text{\% Div. loss} \\ \midrule
$[2,2,\ldots,2]$ & 36  & 2.84357\% & 9.50734\% & 52.6257\% \\
$[3,3,\ldots,3]$ & 54  & 2.87952\% & 5.78289\% & 54.0385\% \\
$[4,4,\ldots,4]$ & 72 & 2.86708\%  & 11.3811\% & 53.092\%\\
$[\underbrace{5,5,\ldots,5}_\text{\footnotesize{\centering 18}}]$ & 90 & 2.92614\%  & 12.5152\% & 52.245\% \\
\bottomrule
\end{tabular}
\npnoround 
\end{table}

\subsubsection{Effectiveness of the core-sets results}\label{sec:exp_results__effect_coresets}

\paragraph{Reddit dataset.} We also run experiments to show the power of data summarization in terms of core-sets. The results in terms of diversity loss are on-par or with marginal difference if we apply the \FDM to the union of core-sets, compared to \FDM applied to the full data. See Table~\ref{table_loss_diversity_colored_fulldata_vs_coresets_timegains} for the results on both \SumPairwise and \SumNN distances. (In the case of \SumNNDist, there are cases where one is marginally better than the other\footnote{For example, \FDM on the core-sets is occasionally better in diversity compared to the full data. One should note that 
the higher approximation factor contributes to this.}.)
Moreover, running \FDM to the union of core-sets is orders of magnitude faster compared to the same algorithm applied to the full data.


\paragraph{MovieLens dataset.} As shown in Table~\ref{table_loss_diversity_colored_fulldata_vs_coresets_timegains_movielens}, the comparison of the diversity outcome of the \FDM applied on the full data {\em versus} the \FDM applied on the data summary presented by the core-sets show similar trend as in the Reddit dataset (Table~\ref{table_loss_diversity_colored_fulldata_vs_coresets_timegains}): few percent in diversity is lost for \SumPairwise notion, while for the \SumNN the results are mixed either in favor of the full data outcome or the core-set in terms of diversity. In the case of \MinPairwise the loss of diversity is either 0\% in a case the two points connected with the minimum distance are in the union of the core-sets, or it can be a higher value in the opposite case due to the fragility of the \MinPairwiseDist.

\setlength{\tabcolsep}{1pt}
\begin{table}[H]
\caption{The {\em loss} of diversity (\% Div. loss) expressed as a relative change of diversity distances, and the running time {\em gains} ($\times$ times faster) of the \FDM when applied to the union of core-sets compared to \FDM applied to the full data for \SumPairwise, \SumNN and \MinPairwise distances for uniform or increasing color values $k_i$ for {\em MovieLens} dataset for January 2001 and $m=4$.}
\label{table_loss_diversity_colored_fulldata_vs_coresets_timegains_movielens}
\centering
\npdecimalsign{.}
\nprounddigits{2}
\begin{tabular}{rrn{5}{2}n{5}{2}n{5}{2}n{5}{2}n{5}{2}n{5}{2}n{5}{2}n{5}{2}}
\toprule
\multicolumn{2}{c}{\FDM full data vs. core-sets} & \multicolumn{2}{c}{\SumPairwise} &  \multicolumn{2}{c}{\SumNN} & \multicolumn{2}{c}{\MinPairwise}               \\
    \midrule
colors $k_i$ & $\sum k_i$ & \text{\% Div. loss} &\text{Time gain ($\times$)} & \text{\% Div. loss} &\text{Time gain ($\times$)}  & \text{\% Div. loss} &\text{Time gain ($\times$)} \\ \midrule
$[2,2,2,2]$ & 8 & 1.16173\% & 1515.18 & 2.80776\% & 1225.64 & 0.00\% & 1734.86  \\
$[3,3,3,3]$ & 12 & 1.3764\%  & 967.962 & -0.159853\% & 604.146 & 26.5067\% & 1130.37  \\
$[4,4,4,4]$ & 16 & 0.794918\%  & 617.668 & -1.61965\% & 325.725 & 46.166 & 874.166  \\
$[5,5,5,5]$ & 20 & 0.609909\%   & 455.461 & 0.0\% & 207.712 & 26.7498 & 686.421  \\
$[6,6,6,6]$ & 24 & 0.624534\%  & 277.634 & -0.234932\% & 128.246 & 26.7498 & 582.367 
\\ \midrule
$[2,4,6,8]$ & 20  & 0.829397\%  & 412.016 & -2.94178\% & 205.9 & 9.24823\% & 1515.18  \\
$[3,6,9,12]$ & 30 & 0.774792\%  & 143.139 & 0.0400276\% & 83.4836 & 0.68421\% & 967.962 \\
$[4,8,12,16]$ & 40 & 0.897471\%  & 121.84 & 5.29933\% & 41.7264 & 11.7924\% & 617.668 \\
$[5,10,15,20]$ & 50 & 0.702537\%  & 30.6779 & 0.0517806\% & 25.9206 & 0.00\% & 455.461  \\
$[6,12,18,24]$ & 60 & 0.752519\%  & 18.3957 & -8.85148\% & 13.0692 & 43.3782\% & 277.634 
\\ \bottomrule
\end{tabular}
\npnoround 
\end{table}

\subsubsection{Experiment results with alternative semantic embeddings}\label{subsubsec:different_semantic_emb_results}

In this part, we show that using different embeddings for the messages produce similar trend in the obtained results. Table~\ref{table_loss_diversity_standard_colored_fulldata_diff_mbeddings} present the comparison results in terms of diversity loss in the notion of \SumPairwiseDist if we apply a \FDM algorithm in comparison with the standard \DM algorithm for various semantic message embeddings. The results show that we have a negligible loss of diversity, while we have the desired fairness (in terms of the requested $k_i$'s); irrelevant of the employed pre-trained model to embed the messages in a multi-dimensional space by using 6 state-of-the-art pre-trained models for embedding a message or a paragraph: BERT \cite{devlin-etal-2019-bert}; T5~\cite{ni-etal-2022-sentence}; MPNETv2~\cite{Song2020_MPNET}; MiniLM~\cite{Wang_minilm_2020}; RoBERTa~\cite{liu2019roberta} and GloVe~\cite{pennington2014glove}. We use one of the highest performance and popular pre-trained models (without further training), however the quality of semantic sentence \& paragraph embeddings using Large Language Models (LLM) is out-of-scope for this work.


\setlength{\tabcolsep}{4pt}
\begin{table}
\caption{The {\em loss} of diversity (\% Div. loss) between \DM vs. \FDM on the full data, expressed as a relative change, of the concerned \SumPairwiseDist for uniform or increasing color values $k_i$ for six popular semantic embeddings. The results in the main part of the paper and in the other figures in this appendix are based on BERT \cite{devlin-etal-2019-bert} message embeddings.}
\label{table_loss_diversity_standard_colored_fulldata_diff_mbeddings}
\centering
\npdecimalsign{.}
\nprounddigits{2}
\begin{tabular}{rrn{5}{2}n{5}{2}n{5}{2}n{5}{2}n{5}{2}n{5}{2}n{5}{2}}
\toprule
\multicolumn{2}{c}{\DM vs. \FDM} & \text{BERT \cite{devlin-etal-2019-bert}} &  \text{T5 \cite{ni-etal-2022-sentence}} & \text{MPNETv2 \cite{Song2020_MPNET}} \\ \midrule
colors $k_i$ & $\sum k_i$ &  \text{\% Div. loss}  & \text{\% Div. loss} & \text{\% Div. loss}  \\ \midrule
$[2,2,2,2]$ & 8  & 1.21847\% & 1.147937 \% & 0.315112\% \\
$[3,3,3,3]$ & 12  & 0.976121\% & 0.602958 \% & 0.318095\% \\
$[4,4,4,4]$ & 16 & 0.502517\%  &  0.301026 \% & 0.203294\%\\
$[5,5,5,5]$ & 20 & 0.473986\%  &  0.29106 \% & 0.107255\% \\
$[6,6,6,6]$ & 24  & 0.193663\% & 0.199738 \% & 0.0486169\% \\ \midrule
$[2,4,6,8]$ & 20  & 0.416084\%   & 0.262109\% & 0.0816615\% \\
$[3,6,9,12]$ & 30  & 0.291217\%  & 0.228232\% & 0.135863\% \\
$[4,8,12,16]$ & 40 & 0.251536\%  & 0.0773217\% & 0.0624647\% \\
$[5,10,15,20]$ & 50 & 0.158796\%  & 0.0975776\% & 0.0191258\% \\
$[6,12,18,24]$ & 60 &  0.124374\% & 0.102119\% & 0.0286685\% \\
\cmidrule(r){3-5}
\multicolumn{2}{c}{} & \text{MiniLM~\cite{Wang_minilm_2020}}& \text{RoBERTa~\cite{liu2019roberta}} &  \text{GloVe \cite{pennington2014glove}} \\ \cmidrule(r){3-5}
 &  &  \text{\% Div. loss}  & \text{\% Div. loss} & \text{\% Div. loss} \\ \cmidrule(r){3-5} 
$[2,2,2,2]$ & 8  & 1.14022\% & 1.26983\% & 1.11034\%  \\
$[3,3,3,3]$ & 12 & 0.708674\% & 0.169832\% & 0.473541\%  \\
$[4,4,4,4]$ & 16 & 0.379171\% & 0.0602197\%  &  0.127032\% \\
$[5,5,5,5]$ & 20 & 0.318678\%& 0.136844\%  &  0.0343911\% \\
$[6,6,6,6]$ & 24  & 0.205963\%& 0.0532032\% & 0.0965146\%  \\ \midrule
$[2,4,6,8]$ & 20  & 0.382371\%& 0.306504\%   & 0.431015\% \\
$[3,6,9,12]$ & 30  & 0.347474\%& 0.175032\%  & 0.380732\%  \\
$[4,8,12,16]$ & 40 & 0.272553\%& 0.191875\%  & 0.307196\% \\
$[5,10,15,20]$ & 50 & 0.316967\% & 0.140156\%  & 0.321285\% \\
$[6,12,18,24]$ & 60 &  0.334947\% & 0.149019\% & 0.130641\% \\
\bottomrule
\end{tabular}
\npnoround 
\end{table}

\subsubsection{Comparison of core-set construction algorithms}\label{subsubsec:coreset_construction_comparison}

In this section for the \SumNN notion of diversity, we first compare the size of the obtained core-sets of the algorithm proposed in this paper and the one from~\cite{ceccarello2018fast}, in practice, for Reddit data. The results are given in Table~\ref{table_coreset_construction_comparison} (main part). The results show that core-set size of the algorithm from~\cite{ceccarello2018fast} can indeed be very high providing very little summarization, while the core-set of our proposal is indeed much smaller (columns 3 and 4). As a result, our algorithm is few orders of magnitude faster than the core-set construction algorithm from~\cite{ceccarello2018fast} (column 5).
This is true while the diversity loss of our algorithm is negligible (less than 1\% to few percent).

\end{document}